\newtheoremstyle{theorem}{1em}{1em}{\slshape}{0pt}{\bfseries}{.}{ }{}
\theoremstyle{theorem}
\newtheorem{theorem}{Theorem}
\newtheorem*{theorem*}{Theorem}
\newtheorem{corollary}[theorem]{Corollary}
\newtheorem{lemma}[theorem]{Lemma}
\theoremstyle{remark}
\newtheorem*{remark*}{Remark}
\providecommand{\setN}{\mathbb{N}}
\providecommand{\setR}{\mathbb{R}}
\theoremstyle{theorem}
\theoremstyle{definition}
\newtheorem*{claim*}{Claim}
\begin{document}

\title{Deterministic Discrepancy Minimization via the Multiplicative Weight Update Method}
\date{University of Washington, Seattle} 
\author{Avi Levy\thanks{Email: {\tt avius@uw.edu}} \and Harishchandra Ramadas\thanks{Email: {\tt ramadas@math.washington.edu}} \and Thomas Rothvoss\thanks{Email: {\tt rothvoss@uw.edu}. Supported by NSF grant 1420180 with title ``\emph{Limitations of convex relaxations in combinatorial optimization}'',  an Alfred P. Sloan Research Fellowship and a David \& Lucile Packard Foundation Fellowship. File compiled on {\today, \currenttime}.}} 

\maketitle

\begin{abstract}
A well-known theorem of Spencer shows that any set system with $n$ sets over $n$ elements
admits a coloring of discrepancy $O(\sqrt{n})$. While the original proof was non-constructive, 
recent progress brought polynomial time algorithms by Bansal, Lovett and Meka, and Rothvoss. 
All those algorithms are randomized, even though Bansal's algorithm admitted a complicated
derandomization. 

We propose an elegant deterministic polynomial time algorithm that is inspired by Lovett-Meka as well as the Multiplicative Weight Update method. The algorithm iteratively updates a fractional coloring while 
controlling the exponential weights that are assigned to the set constraints.

A conjecture by Meka suggests that Spencer's bound can be generalized to symmetric matrices. We 
prove that $n \times n$ matrices that are block diagonal with block size $q$ admit a coloring
of discrepancy $O(\sqrt{n} \cdot \sqrt{\log(q)})$. 

Bansal, Dadush and Garg recently gave a randomized algorithm to find a vector $x$ with entries in $\lbrace{-1,1\rbrace}$ with $\|Ax\|_{\infty} \leq O(\sqrt{\log n})$ in polynomial time, where $A$ is any matrix whose columns have length at most 1. We show that our method can be used to deterministically obtain such a vector.
\end{abstract}

\section{Introduction}

The classical setting in (combinatorial) \emph{discrepancy theory} is that a 
set system $S_1,\ldots,S_m \subseteq \{ 1,\ldots,n\}$  over a ground set of $n$ elements is
given and the goal is to find \emph{bi-coloring} $\chi : \{ 1,\ldots,n\} \to \{ \pm 1\}$ 
so that the worst imbalance $\max_{i=1,\ldots,m} |\chi(S_i)|$ of a set
is minimized. Here we abbreviate $\chi(S_i) := \sum_{j \in S_i} \chi(j)$.
A seminal result of Spencer~\cite{SixStandardDeviationsSuffice-Spencer1985} 
says that there is always a coloring 
$\chi$ where the imbalance is at most $O(\sqrt{n \cdot \log(2m/n)})$ for $m \geq n$. 
The proof of Spencer is based on the \emph{partial coloring method} that was 
first used by Beck in 1981~\cite{Beck-RothsEstimateIsSharp1981}. The argument applies the
 \emph{pigeonhole principle} to obtain that many of the $2^n$ many colorings $\chi,\chi'$ must satisfy 
 $|\chi(S_i) - \chi'(S_i)| \leq O(\sqrt{n \cdot \log(2m/n)})$ for all sets $S_i$. Then one can take the \emph{difference}
between such a pair of colorings with $|\{ j \mid \chi(j) \neq \chi'(j) \}| \geq \frac{n}{2}$ to obtain a \emph{partial coloring} of low discrepancy. This partial coloring can be used to color half of the elements. Then one iterates the argument and again finds a partial coloring. As the remaining set system has only half the elements, the bound in the second iteration becomes better by a constant factor. This process is repeated until all elements are colored; the total discrepancy is then given by a convergent series with value $O(\sqrt{n \cdot \log(2m/n)})$.
More general arguments based on convex geometry were given by Gluskin~\cite{RootNDisc-Gluskin89}
and by Giannopoulos~\cite{Giannopoulos1997}, but their arguments still relied on a pigeonhole principle
with exponentially many pigeons and pigeonholes and did not lead to polynomial time algorithms. 

In fact, Alon and Spencer~\cite{ProbabilisticMethod-AlonSpencer08} even conjectured that finding a coloring satisfying Spencer's theorem
would by intractable.  
In a breakthrough, Bansal~\cite{DiscrepancyMinimization-Bansal-FOCS2010} showed that one could set up
a \emph{semi-definite program} (SDP) to find at least a vector coloring, using Spencer's Theorem to argue that the SDP has to be feasible. He then argued that a random walk guided by updated solutions to that SDP would find a coloring of discrepancy $O(\sqrt{n})$ in the balanced case $m=n$.
 However, his approach needed a very careful choice of parameters.
 
A simpler and truly constructive approach that does not rely on Spencer's argument
was provided by Lovett and Meka~\cite{DiscrepancyMinimization-LovettMekaFOCS12}, who showed that for $x^{(0)} \in [-1,1]^n$, any  polytope of the form 
$P = \{ x \in [-1,1]^n : \left|\left<v_i,x-x^{(0)}\right>\right| \leq \lambda_i \; \forall i \in [m] \}$ 
contains a point that has at least half of the coordinates in $\{-1,1\}$. 
Here it is important that the polytope $P$ is large enough; if the normal vectors $v_i$ are scaled to unit length, then the argument requires that $\sum_{i=1}^m e^{-\lambda_i^2/16} \leq \frac{n}{16}$ holds. Their algorithm surprisingly simple: start a Brownian motion at $x^{(0)}$ and stay inside any face that is hit at any time. They showed that this random walk eventually reaches a point with the desired properties. 

More recently, the third author provided another algorithm which simply consists of taking a random Gaussian vector $x$ 
and then computing the nearest point to $x$ in $P$. In contrast to both of the previous algorithms, this argument
extends to the case that $P = Q \cap [-1,1]^n$ where $Q$ is any symmetric convex set with a large enough Gaussian measure.

However, all three algorithms described above are randomized, although Bansal and 
Spencer~\cite{DeterministicDiscrepancyMinimizationBansalSpencerJournal13} could derandomize 
the original arguments by Bansal. They showed that the random walk already works 
if the directions are chosen from a 4-wise independent distribution, which then allows a 
polynomial time derandomization. 

In our algorithm, we think of the process more as a \emph{multiplicative weight update} procedure, 
where each constraint has a weight that increases if the current point moves in the direction of
its normal vector. The potential function we consider is the sum of those weights. 
Then in each step we simply need to select an update direction in which the potential function
does not increase.

The multiplicative weight update method is a meta-algorithm that originated in game theory but 
has found numerous recent applications in theoretical computer science and machine learning. In the general setting one imagines having a set of experts (in our case the set constraints) that are 
assigned an exponential weight that reflects the value of the gain/loss that expert's decisions had 
in previous rounds. Then in each iteration one selects an update, which can be a convex combination of experts, where the convex coefficient is proportional to the current weight of the expert\footnote{We should mention for the sake of completeness that our update choice is \emph{not} a convex combination of the experts weighted by their exponential weights.}. 
We refer to the very readable 
survey of Arora, Hazan and Kale~\cite{MWU-Survey-Arora-HazanKale2012} for a detailed discussion.

\subsection{Related work}

If we have a set system $S_1,\ldots,S_m$ where each element lies in 
at most $t$ sets, then the partial coloring
technique described above can be used to  find a coloring of discrepancy $O(\sqrt{t} \cdot \log n)$~\cite{DiscrepancyBound-sqrtT-logN-SrinivasanSODA97}. A linear programming approach of
Beck and Fiala~\cite{IntegerMakingTheorems-BeckFiala81} showed that the discrepancy
is bounded by $2t-1$, independent of the size of the set system. 
On the other hand, there is a non-constructive approach of 
Banaszczyk~\cite{BalancingVectors-Banaszczyk98} that provides a bound of $O(\sqrt{t \log n})$ 
using convex geometry arguments. Only very recently, a corresponding algorithmic
bound was found by Bansal, Dadush and Garg~\cite{ConstructiveBanaBansalDadushGarg16}.
A conjecture of Beck and Fiala says that 
the correct bound should be $O(\sqrt{t})$. This bound can be achieved 
for the vector coloring version, see Nikolov~\cite{NikolovVectorColoringKomlosArxiv2013}.

More generally, the theorem of 
Banaszczyk~\cite{BalancingVectors-Banaszczyk98} shows that for any convex set $K$ with Gaussian measure at least $\frac{1}{2}$
and any set of vectors $v_1,\ldots,v_m$ of length $\|v_i\|_2 \leq \frac{1}{5}$, there exist 
signs $\varepsilon_i \in \{ \pm 1\}$ so that $\sum_{i=1}^m \varepsilon_iv_i \in K$.

A set of $k$ permutations on $n$ symbols induces a set system with $kn$ sets
given by the prefix intervals. One can use the partial coloring method
to find a  $O(\sqrt{k} \log n)$ discrepancy coloring~\cite{DiscrepancyOfPermutations-SpencerEtAl}, 
while a linear programming
approach gives a $O(k \log n)$ discrepancy~\cite{DiscrepancyOf3Permutations-Bohus90}. 
In fact, for any $k$ one can always color half of the elements with a
discrepancy of $O(\sqrt{k})$ --- this even holds for each induced sub-system~\cite{DiscrepancyOfPermutations-SpencerEtAl}.
Still, \cite{CounterexampleBecksPermutationConjecture-FOCS12} constructed  
3 permutations requiring a discrepancy of 
$\Theta(\log n)$ to color all elements. 

Also the recent proof of the Kadison-Singer conjecture by Marcus, Spielman
and Srivastava~\cite{KadisonSingerConjecture-MarcusSpielmanSrivastavaArxiv2013} can be 
seen as a discrepancy result. 
They show that a set of vectors $v_1,\ldots,v_m \in \setR^n$ with $\sum_{i=1}^m v_iv_i^T = I$
can be partitioned into two halves $S_1,S_2$ so that $\sum_{i \in S_j} v_iv_i^T \preceq (\frac{1}{2} + O(\sqrt{\varepsilon}))I$ for $j \in \{ 1,2\}$
where $\varepsilon = \max_{i=1,\ldots,m}\{ \|v_i\|_2^2 \}$
and $I$ is the $n \times n$ identity matrix. Their method is based on
interlacing polynomials; no polynomial time algorithm is known to find
the desired partition.

For a symmetric matrix $A \in \setR^{m \times m}$, let $\|A\|_{\textrm{op}}$ denote the largest singular value; in other words, the largest absolute value of any eigenvalue. 
The discrepancy question can be generalized from sets to symmetric matrices $A_1,\ldots,A_n \in \setR^{m \times m}$
with $\|A_i\|_{\textrm{op}} \leq 1$ by defining $\textrm{disc}(\{A_1,\ldots,A_n\}) := \min\{ \|\sum_{i=1}^n x_iA_i\|_{\textrm{op}} : x \in \{ -1,1\}^n\}$. Note that picking 0/1 diagonal matrices $A_i$ corresponding to the incidence vector of element $i$
would exactly encode the set coloring setting.
Again the interesting case is $m = n$; in contrast to the diagonal case it is only known that the
discrepancy is bounded by $O(\sqrt{n \cdot \log(n)})$, which is already attained by a random coloring. 
Meka\footnote{See the blog post\newline \url{https://windowsontheory.org/2014/02/07/discrepancy-and-beating-the-union-bound/}.} conjectured that the discrepancy of $n$ matrices can be bounded by $O(\sqrt{n})$.

For a very readable introduction into discrepancy theory, we recommend
Chapter~4 in the book of Matou{\v s}ek~\cite{GeometricDiscrepancy-Matousek99}
or the book of Chazelle~\cite{DiscrepancyMethod-Chazelle2001}.

\subsection{Our contribution}

Our main result is a deterministic version of the theorem of Lovett and Meka:
\begin{theorem} \label{thm:DeterministicLovettMeka}
Let $v_1,\ldots,v_m \in \setR^n$ unit vectors, $x^{(0)} \in [-1,1]^n$ be a starting point and let $\lambda_1\geq \ldots\geq\lambda_m \geq 0$ be parameters
so that $\sum_{i=1}^m \exp(-\lambda_i^2/16) \leq \frac{n}{32}$. Then there is a deterministic  algorithm that computes a vector $x \in [-1,1]^n$ with
$\left<v_i,x-x^{(0)}\right> \leq 8\lambda_i$ for all $i \in [m]$ and $|\{ i : x_i =\pm 1 \}| \geq \frac{n}{2}$, in time $O(\min\{ n^4m, n^3m\lambda_1^2\})$. 
\end{theorem}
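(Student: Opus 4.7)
The plan is to design a deterministic iterative procedure that mimics the Brownian walk of Lovett and Meka, replacing each random step direction by a linear-algebraic choice that keeps a multiplicative-weight potential bounded. Writing $y := x - x^{(0)}$, I introduce
\[
 \Phi(x) \;=\; \sum_{i=1}^m w_i(x), \qquad w_i(x) \;=\; \exp\!\Bigl(\tfrac{\lambda_i}{C}\langle v_i, y\rangle - \tfrac{\lambda_i^2}{16}\Bigr),
\]
for an absolute constant $C$ (think $C=2$). The hypothesis of the theorem is exactly the statement $\Phi(x^{(0)}) \leq n/32$.

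The algorithm maintains a growing set $F \subseteq [n]$ of coordinates already rounded to $\pm 1$, updates $x$ only in the active coordinates $U = [n]\setminus F$, and halts once $|F|\geq n/2$. In each iteration it assembles the heavy set $H := \{i : w_i(x) \geq \tau\}$ for a small constant threshold $\tau$ (say $\tau = 1/2$). Provided the invariant $\Phi(x) \leq 2\Phi(x^{(0)}) \leq n/16$ holds, one has $|H|\leq n/8$, so $|F|+|H|+1 < n$, and a dimension count furnishes a non-zero $d \in \setR^n$ with
\[
 d_j=0 \;(\forall\, j\in F), \qquad \langle v_i, d\rangle = 0 \;(\forall\, i\in H), \qquad \langle \nabla\Phi(x),d\rangle = 0.
\]
The algorithm moves $x \mapsto x + \delta d$ with $\delta$ the largest value keeping $x \in [-1,1]^n$ and preventing any constraint from overshooting $\tau$ by more than a tiny factor. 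Every iteration either freezes a coordinate or inducts a constraint into $H$, and since $|F|+|H|$ is capped at $5n/8$, only $O(n)$ iterations occur.

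The main technical obstacle is verifying that $\Phi$ never more than doubles over the execution. A second-order Taylor expansion gives
\[
 \Phi(x+\delta d) - \Phi(x) \;\leq\; \delta\, \langle \nabla\Phi(x), d\rangle \;+\; \delta^2 \sum_i w_i \tfrac{\lambda_i^2}{C^2}\langle v_i, d\rangle^2.
\]
The linear term vanishes by construction. In the quadratic term, heavy contributions vanish thanks to $\langle v_i, d\rangle = 0$ for $i\in H$; the remaining terms have $w_i < \tau$ and are governed by the operator norm of $\sum_{i\notin H}\lambda_i^2 v_iv_i^T$ acting on $d$. Choosing either $\delta \asymp 1/n$ (giving $O(n^2)$ micro-steps per coordinate freeze and the $O(n^4m)$ branch) or a finer $\delta \asymp 1/(\sqrt{n}\lambda_1)$ (giving $O(n\lambda_1^2)$ total micro-steps and the $O(n^3 m \lambda_1^2)$ branch) keeps each increment at most $O(\Phi(x^{(0)})/n^2)$, so the total drift stays below $\Phi(x^{(0)})$.

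At termination, the bound $w_i \leq n/16$ rearranges to $\langle v_i,y\rangle \leq C\lambda_i/16 + C\log(n)/\lambda_i$. Constraints that remain outside $H$ throughout obey the stronger inequality $w_i < \tau$, giving $\langle v_i,y\rangle \leq C\lambda_i/16 + C\log(\tau^{-1})/\lambda_i$, comfortably below $8\lambda_i$ whenever $\lambda_i$ is at least a small constant. Constraints that enter $H$ at some point have $\langle v_i,y\rangle$ locked at (a small overshoot of) the value that triggered entry, again well below $8\lambda_i$. Finally, constraints with $\lambda_i$ so small that $\exp(-\lambda_i^2/16) \geq \tau$ already sit in $H$ at the start, so $\langle v_i, y\rangle \equiv 0$ throughout. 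This secures $\langle v_i, y\rangle \leq 8\lambda_i$ for every $i$; each iteration costs $O(nm)$ for weight/gradient updates plus $O(n^3)$ for the null-space computation, matching the advertised running time.
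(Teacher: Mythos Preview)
Your outline shares the paper's high-level shape—exponential weights, orthogonality to the heaviest constraints, and orthogonality to the gradient of $\Phi$ (these correspond to the paper's subspaces $U_3^{(t)}$ and $U_5^{(t)}$). But the proposal has a genuine gap in the treatment of the second-order term, and this is precisely where the paper's argument does its real work.

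You write that the quadratic increment is ``governed by the operator norm of $\sum_{i\notin H}\lambda_i^2 v_iv_i^T$ acting on $d$,'' and then assert that with $\delta\asymp 1/n$ each increment is $O(\Phi(x^{(0)})/n^2)$. But there is no bound on that operator norm: the trace of $\sum_{i\notin H}\lambda_i^2 v_iv_i^T$ is $\sum_{i\notin H}\lambda_i^2$, which can be $\Theta(m\lambda_1^2)$, and the top eigenvalue can be as large as the trace. So your increment bound simply does not follow. The paper handles this with two coupled devices you are missing. First, it restricts $d$ to the span of the small eigenvectors of the \emph{weighted} matrix $M^{(t)}=\sum_i w_i^{(t)}\lambda_i^2 v_iv_i^T$ (the subspace $U_6^{(t)}$), which by Markov's inequality on the spectrum gives $d^TM^{(t)}d\le \frac{16}{n}\sum_i w_i^{(t)}\lambda_i^2$. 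Second, it inserts a per-step discount factor $\exp(-4\delta^2\lambda_i^2/n)$ into the weight update, so that the resulting quadratic error $\frac{16\delta^2}{n}\sum_i w_i^{(t)}\lambda_i^2$ is exactly absorbed by the discount, yielding $\Phi^{(t+1)}\le\Phi^{(t)}$ rather than merely ``$\Phi$ does not quite double.'' Without both ingredients the potential is uncontrolled.

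There is also an inconsistency in your termination argument. You first say each step is maximal (hitting the cube boundary or a weight threshold), giving $O(n)$ iterations; but then your running-time discussion switches to fixed micro-steps of size $\delta\asymp 1/n$. If steps are maximal, the first-order term $\langle\nabla\Phi(x),d\rangle$ is zero only at the step's starting point, not along the whole segment, so the Taylor bound does not apply. If instead you take fixed small steps and recompute $d$ each time, you need a separate termination argument; the paper uses orthogonality to the current iterate $x^{(t)}$ (its subspace $U_2^{(t)}$) so that $\|x^{(t)}\|_2^2$ grows by $\delta^2$ per step, capping the number of steps at $O(n/\delta^2)$. You impose no such constraint.
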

By setting $\lambda_i=O(1)$ this yields a deterministic version of Spencer's theorem in the balanced case $m=n$:
\begin{corollary} \label{lem:RunningTimeSpencersTheorem}
Given $n$ sets over $n$ elements, there is a deterministic algorithm that finds a $O(\sqrt{n})$-discrepancy
coloring in time $O(n^4)$.
\end{corollary}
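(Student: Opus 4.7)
The plan is to iterate Theorem~\ref{thm:DeterministicLovettMeka} as a partial coloring routine, following the Beck--Spencer template. Let $v_1,\ldots,v_n \in \{0,1\}^n$ be the incidence vectors of the sets and let $x^{(0)} := 0$. At stage $k$ we maintain $x^{(k)} \in [-1,1]^n$ with a set $F_k$ of ``free'' coordinates while the remaining coordinates are frozen at values in $\{\pm 1\}$. Write $n_k := |F_k|$ and $P_k$ for the coordinate projection onto $F_k$. The inductive invariant will be $n_k \leq n/2^k$.

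To execute stage $k$, first normalize the restricted vectors $\tilde v_i^{(k)} := P_k v_i / \|P_k v_i\|_2$ (discarding any $i$ with $P_k v_i = 0$). Apply Theorem~\ref{thm:DeterministicLovettMeka} on $\setR^{F_k}$, fed with the $2n$ unit vectors $\{\pm \tilde v_i^{(k)}\}_i$ so that the one-sided guarantee of the theorem becomes a two-sided bound, with starting point $x^{(k)}|_{F_k}$ and common parameter $\lambda_k$. The feasibility condition $2n\exp(-\lambda_k^2/16) \leq n_k/32$ is then satisfied by setting $\lambda_k := c\sqrt{k+1}$ for a suitable absolute constant $c$, using the invariant $n_k \leq n/2^k$. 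The theorem outputs $x^{(k+1)}$ agreeing with $x^{(k)}$ outside $F_k$, freezing at least $n_k/2$ new coordinates (so $n_{k+1} \leq n_k/2$), and satisfying
\[
\bigl|\langle v_i, x^{(k+1)} - x^{(k)}\rangle\bigr| \;\leq\; 8\lambda_k \cdot \|P_k v_i\|_2 \;\leq\; 8\lambda_k\sqrt{n_k}.
\]
After $K = O(\log n)$ stages only $O(1)$ coordinates remain free, and we round them arbitrarily.

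Telescoping the per-stage bounds gives total discrepancy
\[
\bigl|\chi(S_i)\bigr| \;\leq\; \sum_{k=0}^{K-1} 8\lambda_k\sqrt{n_k} + O(1) \;\leq\; 8c\sqrt{n}\sum_{k\geq 0}\sqrt{(k+1)/2^k} + O(1) \;=\; O(\sqrt{n}),
\]
since the series converges. For the running time, each invocation is in dimension $n_k$ with $m = 2n$ constraints and $\lambda_k^2 = O(k+1)$; using the $O(n^3 m \lambda_1^2)$ branch of the bound in Theorem~\ref{thm:DeterministicLovettMeka}, stage $k$ costs $O(n_k^3 \cdot n \cdot (k+1)) = O(n^4(k+1)/8^k)$, and $\sum_{k\geq 0}(k+1)/8^k$ converges, so the grand total is $O(n^4)$.

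I do not expect any substantial obstacle: this is a textbook iterated partial-coloring scheme, and the deterministic guarantee from Theorem~\ref{thm:DeterministicLovettMeka} propagates through each phase unchanged. The only delicate point is the choice $\lambda_k \asymp \sqrt{k+1}$, which must grow fast enough to satisfy the feasibility condition as more constraints compete with a shrinking free dimension, yet slowly enough that both the discrepancy series $\sum_k \sqrt{(k+1)/2^k}$ and the running-time series $\sum_k (k+1)/8^k$ converge.
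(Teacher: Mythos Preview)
Your proposal is correct and follows essentially the same iterated partial-coloring scheme as the paper's Lemma~\ref{lem:SetSystemPartial} specialized to $m=n$: both pick $\lambda_k \asymp \sqrt{\log(n/n_k)}$, note that the per-phase discrepancy increments form a convergent series dominated by the first term, and bound the running time via the $O(n_k^3 m \lambda_1^2)$ branch of Theorem~\ref{thm:DeterministicLovettMeka} summed geometrically. The only cosmetic differences are that you normalize each restricted vector to exactly unit length and explicitly double the constraint set to convert the one-sided guarantee of Theorem~\ref{thm:DeterministicLovettMeka} into a two-sided bound, both of which the paper glosses over.
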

Furthermore, Spencer's \emph{hyperbolic cosine algorithm}~\cite{BalancingGamesSpencerJCTB77} can also be interpreted
as a multiplicative weight update argument. However, the techniques of \cite{BalancingGamesSpencerJCTB77}
are only enough for a $O(\sqrt{n \log(n)})$ discrepancy bound for the balanced case. 
Our hope is that similar arguments can be applied to solve open problems such as whether 
there is an extension of Spencer's result to balance matrices~\cite{MatrixBalancingZouziasICALP12} 
and to better discrepancy minimization techniques in the Beck-Fiala setting. 
To demonstrate the versatility of our arguments, we show an extension to the matrix discrepancy case.

We say that a symmetric matrix $A \in \setR^{m \times m}$ is \emph{$q$-block diagonal} if
it can be written as $A = \textrm{diag}(B_1,\ldots,B_{m/q})$, where each $B_j$ is a symmetric $q \times q$
matrix. 
\begin{theorem} \label{thm:MatrixBalancing}
For given $q$-block diagonal matrices $A_1,\ldots,A_n \in \setR^{m \times m}$ with $\|A_i\|_{\textrm{op}} \leq 1$ for $i=1,\ldots,n$ one can compute a coloring $x \in \{ -1,1\}^n$ with $\|\sum_{i=1}^n x_iA_i\|_{\textrm{op}} \leq O(\sqrt{n \log(\frac{2qm}{n})})$ deterministically in time $O(n^5 + n^4m^3)$. 
\end{theorem}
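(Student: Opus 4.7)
My plan is to reduce Theorem~\ref{thm:MatrixBalancing} to Theorem~\ref{thm:DeterministicLovettMeka} by encoding each block's operator-norm constraint through a family of linear constraints arising from an $\epsilon$-net of the sphere, and then iterating the partial-coloring procedure in the usual Spencer-style fashion.

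\emph{Encoding.} Write $A_i = \mathrm{diag}(B_i^{(1)},\ldots,B_i^{(m/q)})$ with each $B_i^{(j)}$ a symmetric $q\times q$ matrix of operator norm at most $1$, and set $M^{(j)}(x):=\sum_i x_i B_i^{(j)}$, so that the target quantity factors as $\|\sum_i x_iA_i\|_{\mathrm{op}} = \max_{j\in[m/q]} \|M^{(j)}(x)\|_{\mathrm{op}}$. Fix a $(1/4)$-net $N\subseteq S^{q-1}$ of size $|N|\leq 9^q$; standard covering estimates give $\|M\|_{\mathrm{op}}\leq 2\max_{u\in N}|u^\top Mu|$ for every symmetric $q\times q$ matrix $M$. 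For each pair $(j,u)$ define $v_{j,u}\in\setR^n$ by $(v_{j,u})_i := u^\top B_i^{(j)} u$; since each coordinate is bounded by $\|B_i^{(j)}\|_{\mathrm{op}}\leq 1$, one has $\|v_{j,u}\|_2\leq\sqrt n$ and $\langle v_{j,u}, x\rangle = u^\top M^{(j)}(x) u$. Rescaling $\tilde v_{j,u}:=v_{j,u}/\|v_{j,u}\|_2$ yields the unit constraint vectors required by Theorem~\ref{thm:DeterministicLovettMeka}.

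\emph{Iteration.} Starting from $x^{(0)}=0$, in round $t$ with $n_t$ live (non-$\pm 1$) coordinates, invoke Theorem~\ref{thm:DeterministicLovettMeka} restricted to the live coordinates with a parameter $\lambda_t$ chosen so that the exponential budget $\sum_{j,u} e^{-\lambda_t^2/16}$ is at most $n_t/32$. The partial coloring halves the live count and adds at most $O(\lambda_t\sqrt{n_t})$ to each $|\langle v_{j,u},\, x^{(t)}-x^{(t-1)}\rangle|$, which by the $\epsilon$-net bound is also the increment to each block's operator norm. Summing over $O(\log n)$ rounds, the resulting geometric series is dominated by the first round and yields a full $\pm 1$ coloring with operator norm of order $\lambda_1\sqrt n$.

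\emph{Main obstacle and running time.} The delicate step is making $\lambda_1^2$ come out to $\Theta(\log(2qm/n))$ rather than the naive $\Theta(q+\log(m/n))$ that a uniform $\lambda$ over all $9^q\cdot m/q$ net directions would give---the difference between $q$ and $\log q$ is exactly what separates the claimed bound from the one a direct $\epsilon$-net count provides. My plan to close this gap is to exploit the heterogeneous-parameter freedom of Theorem~\ref{thm:DeterministicLovettMeka}: only a polynomial-size core of directions (for instance chosen adaptively from the top eigenvectors of the current block matrices $M^{(j)}(x^{(t-1)})$) receives the tight $\lambda\asymp\sqrt{\log(qm/n)}$, while the remaining net directions receive a much larger $\lambda$ whose exponential contribution is negligible and whose inner-product slack is absorbed by a refined covering argument. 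With $O(\log n)$ outer rounds, the per-round cost $O(n^3 m\lambda_t^2)$ from Theorem~\ref{thm:DeterministicLovettMeka} plus $O(m^3)$ for block-level spectral work sums to the claimed $O(n^5+n^4m^3)$.
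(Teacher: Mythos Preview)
Your proposal has a genuine gap, and the paper's proof proceeds by a completely different mechanism.

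The $\epsilon$-net reduction to Theorem~\ref{thm:DeterministicLovettMeka} inherently produces $\Theta(q)$, not $\Theta(\log q)$, inside the square root: you need $e^{\Theta(q)}$ net directions per block, and the exponential budget of Theorem~\ref{thm:DeterministicLovettMeka} then forces $\lambda^2 \gtrsim q + \log(m/n)$. You correctly identify this as the crux, but your proposed repair does not work. Controlling only a polynomial-sized ``core'' of directions with the tight $\lambda$ and letting the remaining $e^{\Theta(q)}$ directions have a large $\lambda$ means you have \emph{no nontrivial bound} on $u^\top M^{(j)}(x)u$ for those $u$; the covering inequality $\|M\|_{\mathrm{op}}\le 2\max_{u\in N}|u^\top Mu|$ then gives nothing. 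Choosing the core adaptively from the eigenvectors of $M^{(j)}(x^{(t-1)})$ does not help either: the partial-coloring step controls the \emph{increment} $M^{(j)}(x^{(t)})-M^{(j)}(x^{(t-1)})$, whose spectrally important directions are unrelated to those of $M^{(j)}(x^{(t-1)})$ and are unknown until after $x^{(t)}$ is determined. Finally, even writing down the full net of constraint vectors is exponential in $q$, so the running-time claim cannot be met along this route.

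The paper never discretizes the sphere. Instead it designs a separate matrix partial-coloring algorithm (Theorem~\ref{thm:MatrixBalancingPartial}) with a matrix-exponential potential $\Phi^{(t)}=\mathrm{Tr}\bigl[\exp\bigl(\varepsilon\sum_i (x_i^{(t)}-x_i^{(0)})A_i\bigr)\bigr]$, bounded step-by-step via the Golden--Thompson inequality and a quadratic-error subspace (Lemma~\ref{lem:SubspaceWithBoundedMatrixQuadraticError}). This tracks all directions simultaneously with only $m$-dimensional linear algebra per step. The block structure enters only when one freezes the $\frac{n}{16q^2}$ blocks of highest weight: since a $q\times q$ block imposes at most $q^2$ linear constraints, this costs $\frac{n}{16}$ dimensions, and the resulting argument yields $\log(qm/n)$ rather than $q+\log(m/n)$ in the final bound. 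The full coloring and running time then follow by the same phase iteration you describe.
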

Finally, we can also give the first deterministic algorithm for the result of Bansal, Dadush and Garg~\cite{ConstructiveBanaBansalDadushGarg16}. 
\begin{theorem} \label{thm:ConstructiveBeckFialaAlgorithm}
Let $A \in \setR^{m \times n}$ be a matrix with $\|A^j\|_2 \leq 1$ for all columns $j=1,\ldots,n$. Then
there is a deterministic algorithm to find a coloring $x \in \{ -1,1\}^n$ with $\|Ax\|_{\infty} \leq O(\sqrt{\log n})$
in time $O(n^3\log(n) \cdot (m+n))$.
\end{theorem}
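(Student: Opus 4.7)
The plan is to iterate Theorem~\ref{thm:DeterministicLovettMeka} for $T = O(\log n)$ rounds, halving the set of unfrozen coordinates each round in the style of Beck's partial-coloring method. Initialize $x^{(0)} = \mathbf{0}$ and let $L_t \subseteq [n]$ be the alive set at round $t$ (with $L_0 = [n]$); maintain $x^{(t)} \in [-1,1]^n$ with $x^{(t)}_j \in \{\pm 1\}$ whenever $j \notin L_t$. Write $n_t := |L_t|$ and let $\rho_i^{(t)} := \|A_i|_{L_t}\|_2$ be the restricted norm of row $i$. Since each column of $A$ has $\ell_2$-norm at most $1$, one has the essential budget identity
$$\sum_{i=1}^m \bigl(\rho_i^{(t)}\bigr)^2 \;=\; \sum_{j \in L_t} \|A^j\|_2^2 \;\leq\; n_t,$$
so only $n_t/\delta^2$ rows can have $\rho_i^{(t)} \geq \delta$ in any given round.

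At round $t$ I would invoke Theorem~\ref{thm:DeterministicLovettMeka} in $\setR^{n_t}$ with the two-sided family of unit vectors $\pm A_i|_{L_t}/\rho_i^{(t)}$, starting point $x^{(t)}|_{L_t}$, and parameters $\lambda_i^{(t)}$ chosen to satisfy $\sum_i 2\exp\!\bigl(-(\lambda_i^{(t)})^2/16\bigr) \leq n_t/32$. The output extends to $x^{(t+1)} \in [-1,1]^n$ with at least $n_t/2$ coordinates of $L_t$ newly frozen and
$$\bigl|\langle A_i,\, x^{(t+1)}-x^{(t)}\rangle\bigr| \;\leq\; 8\lambda_i^{(t)}\rho_i^{(t)}.$$
After $T$ rounds every coordinate is $\pm 1$, so the total discrepancy of row $i$ is at most $\sum_t 8\lambda_i^{(t)}\rho_i^{(t)}$. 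One natural feasible choice is the row-dependent clipping $\lambda_i^{(t)} = 4\sqrt{\log(32/(\rho_i^{(t)})^2)}$ (with $\lambda_i^{(t)} = 0$ when $(\rho_i^{(t)})^2 \geq 32$), which uses the budget above to verify feasibility; an alternative is the uniform value $\lambda_i^{(t)} = 4\sqrt{\log(32m/n_t)}$ giving per-iteration contribution $O(\rho_i^{(t)}\sqrt{\log(m/n_t)})$.

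The main obstacle is tightening the naive per-row total of $O(\log n)$ that either choice above yields to the Bansal--Dadush--Garg target $O(\sqrt{\log n})$. My plan is a row-wise interpolated schedule of $\lambda_i^{(t)}$ combined with the monotonicity $\rho_i^{(t+1)} \leq \rho_i^{(t)}$ and the global mass bound $\sum_{t,i}(\rho_i^{(t)})^2 \leq 2n$. Applying Cauchy--Schwarz to
$$\sum_t 8\lambda_i^{(t)}\rho_i^{(t)} \;\leq\; \sqrt{\textstyle\sum_t(\lambda_i^{(t)})^2}\cdot\sqrt{\textstyle\sum_t(\rho_i^{(t)})^2}$$
and exploiting that heavy rows can dominate only a bounded number of rounds should extract the required $\sqrt{\log n}$ factor; this is where the bulk of the technical work lies. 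Finally, since each invocation of Theorem~\ref{thm:DeterministicLovettMeka} operates in dimension $n_t$ with $\lambda_1^{(t)} = O(\sqrt{\log n})$, it runs in time $O(n_t^3 m \log n)$ by the $O(n^3 m\lambda_1^2)$ bound; summing the geometric series over the $O(\log n)$ rounds (plus $O(n^4\log n)$ overhead for maintaining $L_t$ and the $\rho_i^{(t)}$) yields the stated $O(n^3\log n\cdot(m+n))$ overall.
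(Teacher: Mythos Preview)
Your approach has a genuine gap that you yourself flag: iterating Theorem~\ref{thm:DeterministicLovettMeka} for $O(\log n)$ rounds yields per-row discrepancy $O(\log n)$, not $O(\sqrt{\log n})$. This is the known barrier for partial-coloring in the Koml\'os/Beck--Fiala setting (cf.\ the $O(\sqrt{t}\log n)$ bound of~\cite{DiscrepancyBound-sqrtT-logN-SrinivasanSODA97}), and your Cauchy--Schwarz rescue cannot close it. The ``global mass bound'' $\sum_{t,i}(\rho_i^{(t)})^2\le 2n$ is summed over \emph{all} rows and gives no per-row control: a row $A_i$ with $\|A_i\|_2=1$ supported on, say, $\Theta(\log n)$ coordinates that happen to be frozen only in the last rounds has $\rho_i^{(t)}=1$ for $\Theta(\log n)$ rounds, hence $\sum_t(\rho_i^{(t)})^2=\Theta(\log n)$. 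On the other side, feasibility forces $\lambda_i^{(t)}\ge\Omega(1)$ in each such round (setting $\lambda_i^{(t)}=0$ costs one of only $n_t/32$ hard constraints, and there can be $\Theta(n_t)$ rows with $\rho_i^{(t)}\ge 1$), so $\sum_t(\lambda_i^{(t)})^2=\Omega(\log n)$ as well. The Cauchy--Schwarz product is therefore still $\Theta(\log n)$, and no scheduling of the $\lambda_i^{(t)}$ helps, because each invocation of Theorem~\ref{thm:DeterministicLovettMeka} is oblivious to what happens in other rounds; there is no mechanism to trade early discrepancy against later length shrinkage.

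The paper does \emph{not} iterate Theorem~\ref{thm:DeterministicLovettMeka} at all. It runs a \emph{single} deterministic walk all the way to a full $\pm1$ coloring, driven by a new potential
\[
  \Phi \;=\; \sum_{i\in I_{\mathrm{light}}} \exp\Bigl(\alpha\langle A_i,x\rangle \;+\; \beta\min\bigl\{C,\ L(i,x)\bigr\}\Bigr),
  \qquad L(i,x):=\sum_{j}(1-x_j^2)A_{ij}^2,
\]
after splitting rows into \emph{light} ($\|A_i\|_\infty$ small) and \emph{heavy} (few nonzeros, handled by a direct counting argument). The effective-length term $L(i,x)$ is the key idea: as coordinates approach $\pm1$ it decreases, and in each step this decrease is arranged (via suitable subspace constraints) to dominate the second-order growth of the $\alpha\langle A_i,x\rangle$ term, provided $\beta\ge\Omega(\alpha^2)$. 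Thus $\Phi$ is nonincreasing throughout, and at termination (where $L(i,x)=0$) one reads off $\alpha\langle A_i,x\rangle\le\log\Phi^{(0)}\le\log(me^{C\beta})$, giving $\langle A_i,x\rangle\le O(\log n)/\alpha+O(\beta/\alpha)=O(\sqrt{\log n})$ for $\alpha=\Theta(\sqrt{\log n})$, $\beta=\Theta(\log n)$. This continuous, per-step coupling of discrepancy growth to length loss is exactly the cross-round accounting that a black-box iteration of partial coloring cannot supply.
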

While \cite{ConstructiveBanaBansalDadushGarg16} need to solve a semidefinite program in each step of their random walk, our algorithm does not require solving any SDPs. Note that we do not optimize running times such as by using fast matrix multiplication.

In the Beck-Fiala setting, we are given a set system over $n$ elements, where each element is contained in at most $t$ subsets. Theorem \ref{thm:ConstructiveBeckFialaAlgorithm} then provides the first polynomial-time deterministic algorithm that produces a coloring with discrepancy $O(\sqrt{t\log n})$; we simply choose the matrix $A$ whose rows are the incidence vectors of members of the set system, scaled by $1/\sqrt{t}$. 

For space reasons, we defer the proof of Theorem \ref{thm:MatrixBalancing} to Appendix \ref{BoundedColumns}.

\section{The algorithm for partial coloring}

We will now describe the algorithm proving Theorem~\ref{thm:DeterministicLovettMeka}.  First note that for any $\lambda_i>2\sqrt{n}$ we can remove the constraint $\left<v_i,x - x_0\right> \leq \lambda_i$, as it does not cut off any point in $[-1,1]^n$. Thus we assume without loss of generality that $2\sqrt{n}\geq \lambda_1\geq \cdots \geq \lambda_m\geq 0$. Let $\delta:=\frac{1}{\lambda_1}$ denote the step size of our algorithm. The algorithm will run for $O(n/\delta^2)$ iterations, each of computational cost $O(n^2m)$. Note that $\delta=O(1/\sqrt{n})$ so the algorithm terminates in $O(n^2)$ iterations. The total runtime is hence $O(n^2m\cdot n/\delta^2) = O(n^3m\lambda_1^2)\leq O(n^4m)$.

For a symmetric matrix $M \in \setR^{n \times n}$ we know that an \emph{eigendecomposition} 
$M = \sum_{j=1}^{n} \mu_j u_ju_j^T$ can be computed in time $O(n^3)$. 
Here $\mu_j := \mu_j(M)$ is the \emph{$j$th eigenvalue} of $M$ and $u_j := u_j(M)$ is the 
corresponding \emph{eigenvector} with $\|u_j\|_2 = 1$.
We make the convention that the eigenvalues are sorted as $\mu_1 \geq \ldots \geq \mu_n$.
The algorithm is as follows:
\begin{enumerate}
\item[(1)] Set weights $w_i^{(0)} = \exp(-\lambda_i^2)$ for all $i=1,\ldots,m$. 
\item[(2)] FOR $t = 0$ TO $\infty$ DO
  \begin{enumerate}
  \item[(3)] Define the following subspaces
\begin{itemize*}
\item $U_1^{(t)} := \textrm{span}\{ e_j : -1 < x^{(t)}_j < 1\}$
\item $U_2^{(t)} := \{ x \in \setR^n\mid \big<x,x^{(t)}\big> = 0\}$
\item $U_3^{(t)} := \{ x \in \setR^n \mid \left<v_{i},x\right> = 0 \; \forall i \in I^{(t)}\}$. Here $I^{(t)} \subseteq [m]$ are the $|I^{(t)}| = \frac{n}{16}$ indices with maximum weight $w_i^{(t)}$.  
\item $U_4^{(t)} := \{ x \in \setR^n \mid \left<v_i,x\right> = 0 \;\; \forall i\textrm{ with }\lambda_i \leq 1\}$
\item $U_5^{(t)} := \{ x \in \setR^n \mid \big<x, \sum_{i=1}^m \lambda_iw_i^{(t)}\cdot \exp\left(-\frac{4\delta^2\lambda_i^2}{n}\right) v_i \big> = 0\}$
\item $U_6^{(t)} := \textrm{span}\{ u_j(M^{(t)}) : \frac{1}{16}n\leq j \leq n\}$, for $M^{(t)} := \sum_{i=1}^m w_i^{(t)} \lambda_i^2 v_iv_i^T$.
\item $U^{(t)} := U_1^{(t)} \cap \ldots \cap U_6^{(t)}$
\end{itemize*}
  \item[(4)] Let $z^{(t)}$ be any unit vector in $U^{(t)}$ 
  \item[(5)] Choose a maximal $\alpha^{(t)} \in (0,1]$ so that $x^{(t+1)}:=x^{(t)} + \delta \cdot y^{(t)} \in [-1,1]^n$, with $y^{(t)} = \alpha^{(t)}z^{(t)}$.
  \item[(6)] Update $w_i^{(t+1)} := w_i^{(t)} \cdot \exp(\lambda_i \cdot \delta \cdot \left<v_i, y^{(t)}\right>) \cdot \exp\left(-\frac{4\delta^2\lambda_i^2}{n}\right)$.
  \item[(7)] Let $A^{(t)} := \{ j \in [n] : -1 < x_j^{(t)} < 1 \}$. If $|A^{(t)}| < \frac{n}{2}$, then set $T := t$ and stop.
  \end{enumerate}
\end{enumerate}
The intuition is that we maintain weights $w_i^{(t)}$ for each constraint $i$ that increase exponentially
with the one-sided discrepancy $\left<v_i,x^{(t)}-x^{(0)}\right>$. Those weights are discounted in each iteration by a factor that is slightly less than 1 --- with a bigger discount for constraints with a larger parameter $\lambda_i$.
The subspaces $U_1^{(t)}$ and $U_2^{(t)}$ ensure that the length of $x^{(t)}$ is monotonically increasing and
fully colored elements remain fully colored.

\subsection{Bounding the number of iterations}

First, note that if the algorithm terminates, then 
at least half of the variables in $x^{(T)}$ will be either $-1$ or $+1$. In particular, once a
variable is set to $\pm 1$, it is removed from the set $A^{(t)}$ of active variables and the subsequent updates will leave those coordinates invariant. 

First we bound the number of iterations. Here we use that the algorithm always makes a 
step of length $\delta$ orthogonal to the current position --- except for the steps where
it hits the boundary. 
\begin{lemma} \label{lem:BoundOnNumOfIterations}
The algorithm terminates after $T = O(\frac{n}{\delta^2})$ iterations.
\end{lemma}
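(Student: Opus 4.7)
The plan is to exploit two of the subspace constraints imposed on the update direction $z^{(t)}$, namely $z^{(t)} \in U_1^{(t)}$ (support on active coordinates) and $z^{(t)} \in U_2^{(t)}$ (orthogonality to $x^{(t)}$), and to split the iterations into two classes according to the value of $\alpha^{(t)}$.

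First I would observe that since $z^{(t)}$ is a unit vector orthogonal to $x^{(t)}$, an exact Pythagorean computation gives
\[
\|x^{(t+1)}\|_2^2 \;=\; \|x^{(t)} + \delta \alpha^{(t)} z^{(t)}\|_2^2 \;=\; \|x^{(t)}\|_2^2 + \delta^2 (\alpha^{(t)})^2.
\]
In particular, in any iteration where $\alpha^{(t)}=1$ the squared norm of the iterate increases by exactly $\delta^2$. Since $x^{(t)} \in [-1,1]^n$ at all times, one has $\|x^{(t)}\|_2^2 \leq n$ throughout, so the number of iterations with $\alpha^{(t)}=1$ is at most $n/\delta^2$.

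Next I would handle the iterations with $\alpha^{(t)}<1$. By definition of a maximal $\alpha^{(t)}$, the step $x^{(t)} + \delta \alpha^{(t)} z^{(t)}$ hits the boundary of the cube $[-1,1]^n$, so there is a coordinate $j \in A^{(t)}$ with $x_j^{(t+1)} \in \{\pm 1\}$. Using $z^{(t)} \in U_1^{(t)}$, the components of $z^{(s)}$ corresponding to already-fixed coordinates vanish for every subsequent $s \ge t$, so once a coordinate leaves $A^{(t)}$ it never returns. Therefore this second class of iterations is bounded by the total number of coordinates, i.e.\ at most $n$.

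Adding the two bounds yields $T \leq n/\delta^2 + n = O(n/\delta^2)$, since $\delta \leq 1$. The proof is essentially a one-line geometric observation once the right subspaces are in place; the only thing one has to be careful about is pointing out that the inclusions $z^{(t)} \in U_1^{(t)} \cap U_2^{(t)}$ are what make both the orthogonality step and the coordinate-freezing argument legal, and there is no real obstacle beyond that.
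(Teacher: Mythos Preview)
Your proof is correct and follows essentially the same approach as the paper: both use the orthogonality $y^{(t)}\in U_2^{(t)}$ to get a Pythagorean norm increase of $\delta^2$ on full steps, bound the number of truncated steps $\alpha^{(t)}<1$ by $n$ via coordinate freezing, and conclude from $\|x^{(t)}\|_2^2\le n$. Your write-up is in fact slightly more explicit than the paper's in pointing out that $z^{(t)}\in U_1^{(t)}$ is what guarantees frozen coordinates stay frozen.
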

\begin{proof}
First, we can analyze the length increase
\[
 \|x^{(t+1)}\|_2^2 = \|x^{(t)} + \delta \cdot y^{(t)}\|_2^2 = \|x^{(t)}\|_2^2 + 2\delta \underbrace{\big<x^{(t)},y^{(t)}\big>}_{=0} + \delta^2 \|y^{(t)}\|_2^2,
\]
using that $y^{(t)} \in U_2^{(t)}$.
Whenever $\alpha^{(t)} = 1$, we have $\|x^{(t+1)}\|_2^2 \geq \|x^{(t)}\|_2^2 + \delta^2$.
It happens that $\alpha^{(t)} < 1$ at most $n$ times, simply because in each such iteration
$|A^{(t)}|$ must decrease by at least one. We know that $x^{(T)} \in [-1,1]^n$.
Suppose for the sake of contradiction that $T > \frac{2n}{\delta^2}$, then $\|x^{(T)}\|_2^2 \geq (T-n) \cdot \delta^2 > n$, which is impossible. We can hence conclude that the algorithm will terminate in step (7) after at most $\frac{2n}{\delta^2}$ iterations. 
\end{proof}

\subsection{Properties of the subspace \texorpdfstring{$U^{(t)}$}{U(t)}}

One obvious condition to make the algorithm work is to guarantee that the subspace
$U^{(t)}$ satisfies $\dim(U^{(t)}) \geq 1$. In fact, its dimension will even be linear in $n$.
\begin{lemma}
In any iteration $t$, one has  $\dim(U^{(t)}) \geq \frac{n}{8}$.
\end{lemma}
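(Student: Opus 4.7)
The plan is to bound $\dim(U^{(t)})$ from below by bounding the codimension of each $U_i^{(t)}$ from above and using $\dim(U^{(t)}) \geq n - \sum_{i=1}^{6} \textrm{codim}(U_i^{(t)})$.

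First I would handle the five easy subspaces. Since the algorithm has not yet terminated at iteration $t$, we have $|A^{(t)}| \geq \frac{n}{2}$, and since $U_1^{(t)} = \textrm{span}\{e_j : j \in A^{(t)}\}$, this gives $\textrm{codim}(U_1^{(t)}) = n - |A^{(t)}| \leq \frac{n}{2}$. The subspaces $U_2^{(t)}$ and $U_5^{(t)}$ are each cut out by a single linear equation, so their codimensions are at most $1$ each. The subspace $U_3^{(t)}$ is defined by $|I^{(t)}| = \frac{n}{16}$ linear equations, and $U_6^{(t)}$ is spanned by the last $\frac{15n}{16}+1$ eigenvectors of $M^{(t)}$, so both have codimension at most $\frac{n}{16}$.

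The only subspace requiring the hypothesis on the $\lambda_i$'s is $U_4^{(t)}$. Its codimension is at most the number of indices $i$ with $\lambda_i \leq 1$. For any such $i$, $\exp(-\lambda_i^2/16) \geq e^{-1/16}$, so the hypothesis $\sum_{i=1}^m \exp(-\lambda_i^2/16) \leq \frac{n}{32}$ gives
\[
|\{i : \lambda_i \leq 1\}| \leq e^{1/16} \cdot \frac{n}{32} \leq \frac{n}{30}.
\]
Summing the six contributions yields
\[
\textrm{codim}(U^{(t)}) \leq \frac{n}{2} + 1 + \frac{n}{16} + \frac{n}{30} + 1 + \frac{n}{16} \leq \frac{7n}{8},
\]
for $n$ a sufficiently large constant (which we may assume, since otherwise the theorem is trivial), and hence $\dim(U^{(t)}) \geq \frac{n}{8}$.

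I do not expect a serious obstacle: the only nontrivial step is converting the sum bound on $\exp(-\lambda_i^2/16)$ into an integer count of indices with $\lambda_i \leq 1$, and even here a crude pointwise lower bound on $\exp(-\lambda_i^2/16)$ suffices. The constants $\frac{1}{16}$ and $\frac{1}{32}$ in the definitions of $U_3^{(t)}$, $U_6^{(t)}$ and in the hypothesis appear to have been chosen precisely so that this codimension count comes out comfortably below $\frac{7n}{8}$.
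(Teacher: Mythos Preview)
Your proof is correct and follows essentially the same codimension-counting approach as the paper. The paper's own proof is even terser: it simply writes $\dim(U^{(t)}) \geq |A^{(t)}| - |I^{(t)}| - |\{ i: \lambda_i \leq 1\}| - \frac{n}{16} - 2 \geq \frac{n}{2} - \frac{n}{16} - \frac{n}{8} - \frac{n}{16} - 2 \geq \frac{n}{8}$ for $n \geq 16$, using the cruder bound $|\{i:\lambda_i\leq 1\}|\leq \frac{n}{8}$ without explicitly deriving it from the hypothesis; your derivation of the sharper $\frac{n}{30}$ via $e^{-1/16}\leq \exp(-\lambda_i^2/16)$ makes that step explicit.
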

\begin{proof}
We simply need to account for all linear constraints that define $U^{(t)}$
and we get  
\[
  \dim(U^{(t)}) \geq |A^{(t)}| - |I^{(t)}| - |\{ i: \lambda_i \leq 1\}| -\frac{n}{16} - 2 \geq \frac{n}{2} - \frac{n}{16}-\frac{n}{8} - \frac{n}{16} - 2 \geq \frac{n}{8}
\]
assuming that $n \geq 16$.
\end{proof}
Another crucial property will be that every vector in $U^{(t)}$ has a bounded \emph{quadratic error term}:
\begin{lemma} \label{lem:QuadraticErrorInSubspaceU}
For each unit vector $y \in U^{(t)}$ one has $y^TM^{(t)}y \leq \frac{16}{n} \sum_{i=1}^m w_i^{(t)}\lambda_i^2$. 
\end{lemma}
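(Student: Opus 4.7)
The plan is to exploit the sixth subspace $U_6^{(t)}$, which is designed precisely to control this quadratic form. Since $U^{(t)} \subseteq U_6^{(t)} = \textrm{span}\{u_j(M^{(t)}) : \tfrac{n}{16} \leq j \leq n\}$, any unit vector $y \in U^{(t)}$ lies in the span of the eigenvectors of $M^{(t)}$ corresponding to eigenvalues $\mu_{n/16}, \mu_{n/16+1}, \ldots, \mu_n$. Because these eigenvalues are the smallest $n - n/16 + 1$ of $M^{(t)}$ (with our convention $\mu_1 \geq \mu_2 \geq \cdots \geq \mu_n$), the Rayleigh quotient immediately gives
\[
y^T M^{(t)} y \leq \mu_{n/16}(M^{(t)}).
\]

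Next I would bound $\mu_{n/16}(M^{(t)})$ via an averaging-over-trace argument. The matrix $M^{(t)} = \sum_{i=1}^m w_i^{(t)} \lambda_i^2 v_i v_i^T$ is positive semidefinite (a nonnegative combination of rank-one PSD matrices), so all eigenvalues are nonnegative. Hence
\[
\tfrac{n}{16} \cdot \mu_{n/16}(M^{(t)}) \leq \sum_{j=1}^{n/16} \mu_j(M^{(t)}) \leq \sum_{j=1}^{n} \mu_j(M^{(t)}) = \textrm{tr}(M^{(t)}).
\]

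Finally, I would compute the trace using linearity and the fact that $\textrm{tr}(v_i v_i^T) = \|v_i\|_2^2 = 1$ (since the $v_i$ are unit vectors):
\[
\textrm{tr}(M^{(t)}) = \sum_{i=1}^m w_i^{(t)} \lambda_i^2 \, \textrm{tr}(v_i v_i^T) = \sum_{i=1}^m w_i^{(t)} \lambda_i^2.
\]
Combining these three displays yields $y^T M^{(t)} y \leq \mu_{n/16}(M^{(t)}) \leq \tfrac{16}{n} \sum_{i=1}^m w_i^{(t)} \lambda_i^2$, as required.

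There is no real obstacle here: the lemma is essentially a packaging of a standard spectral-averaging trick, and $U_6^{(t)}$ was built for exactly this purpose. The only point to be careful about is the PSD-ness of $M^{(t)}$ (needed to drop from the full trace down to the partial sum of top eigenvalues without worrying about sign cancellation), which is immediate from the nonnegativity of the weights $w_i^{(t)}$ and of $\lambda_i^2$.
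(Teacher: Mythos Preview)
Your proposal is correct and essentially identical to the paper's proof: both compute $\textrm{Tr}[M^{(t)}] = \sum_{i=1}^m w_i^{(t)}\lambda_i^2$, use positive semidefiniteness of $M^{(t)}$, and then bound $\mu_{n/16}$ by $\frac{16}{n}\textrm{Tr}[M^{(t)}]$ via averaging (the paper phrases this last step as ``Markov's inequality'' rather than writing out the sum explicitly).
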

\begin{proof}
We have $\textrm{Tr}[v_iv_i^T] = 1$ since each $v_i$ is a unit vector, hence $\textrm{Tr}[M^{(t)}] = \sum_{i=1}^m w_i^{(t)} \lambda_i^2 \textrm{Tr}[v_iv_i^T] = \sum_{i=1}^m w_i^{(t)} \lambda_i^2.$ Because $M^{(t)}$ is positive semidefinite, we know that $\mu_1,\ldots,\mu_n \geq 0$, where $\mu_j := \mu_j(M^{(t)})$ is the $j$th
eigenvalue. Then 
by \emph{Markov's inequality} at most a $\frac{1}{16}$ fraction of eigenvalues can be larger than $\frac{16}{n} \cdot \textrm{Tr}[M^{(t)}]$. The claim follows as $U_6^{(t)}$ is spanned by the $\frac{15}{16}n$
eigenvectors $v_j(M^{(t)})$ belonging to the smallest eigenvalues, which means $\mu_j \leq \frac{16}{n} \textrm{Tr}[M^{(t)}]$ for $j=\frac{1}{16}n,\ldots,n$.
\end{proof}

\subsection{The potential function}

So far, we have defined the weights by iterative update steps, but it is not
hard to verify that in each iteration $t$ one has the explicit expression
\begin{equation} \label{eq:ExplicitExpressionForWit}
  w_i^{(t)} = \exp\Big(\lambda_i\big<v_i,x^{(t)}-x^{(0)}\big> - \lambda_i^2 \cdot \Big(1 + t \cdot \frac{4\delta^2}{n}\Big)\Big).
\end{equation}
Inspired by the multiplicative weight update method, we consider the \emph{potential function} $\Phi^{(t)} := \sum_{i=1}^m w_i^{(t)}$ that is simply the sum of the individual weights. 
At the beginning of the algorithm we have $\Phi^{(0)} = \sum_{i=1}^m w_i^{(0)} = \sum_{i=1}^m \exp(-\lambda_i^2/16) \leq \frac{n}{32}$ using the assumption in Theorem~\ref{thm:DeterministicLovettMeka}. 
Next, we want to show that the potential function does not increase. Here the choice of the
subspaces $U_5^{(t)}$ and $U_6^{(t)}$ will be crucial to control the error.

\begin{lemma}\label{PotentialFunctionBounded} 
In each iteration $t$ one has $\Phi^{(t+1)} \leq \Phi^{(t)}$.
\end{lemma}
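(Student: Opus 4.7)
The plan is a standard multiplicative-weight-update potential argument: carry out a second-order expansion of the exponential update in step~(6), and show that each of the resulting terms is controlled by one of the subspaces $U_5^{(t)}$ and $U_6^{(t)}$. Set $\alpha_i := \delta \lambda_i \langle v_i, y^{(t)}\rangle$ and $\beta_i := 4\delta^2\lambda_i^2/n$, and write the update as $w_i^{(t+1)} = \tilde{w}_i^{(t)} \exp(\alpha_i)$ where $\tilde{w}_i^{(t)} := w_i^{(t)} \exp(-\beta_i)$ is the ``discounted'' weight. Since $y^{(t)} = \alpha^{(t)} z^{(t)}$ with $\alpha^{(t)} \in (0,1]$ and $\|z^{(t)}\|_2 = 1$, and since $v_i$ is a unit vector, $|\alpha_i| \leq \delta\lambda_i \leq \delta\lambda_1 = 1$. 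The elementary bound $e^x \leq 1 + x + x^2$ for $x \leq 1$ then gives
$$\Phi^{(t+1)} \;\leq\; \sum_i \tilde{w}_i^{(t)} \;+\; \sum_i \tilde{w}_i^{(t)}\alpha_i \;+\; \sum_i \tilde{w}_i^{(t)}\alpha_i^2.$$

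Next I would treat the three terms separately. For the linear term, observe that the normal vector defining $U_5^{(t)}$ is precisely $\sum_i \lambda_i w_i^{(t)} e^{-\beta_i} v_i = \sum_i \lambda_i \tilde{w}_i^{(t)} v_i$; since $y^{(t)} \in U_5^{(t)}$,
$$\sum_i \tilde{w}_i^{(t)} \alpha_i \;=\; \delta \bigl\langle y^{(t)},\, \textstyle\sum_i \lambda_i \tilde{w}_i^{(t)} v_i\bigr\rangle \;=\; 0.$$
This is the whole point of choosing the discounted weights in $U_5^{(t)}$. For the quadratic term, set $\tilde M^{(t)} := \sum_i \tilde{w}_i^{(t)} \lambda_i^2 v_iv_i^T \preceq M^{(t)}$. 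Then $\sum_i \tilde{w}_i^{(t)} \alpha_i^2 = \delta^2 (y^{(t)})^T \tilde M^{(t)} y^{(t)} \leq \delta^2 (y^{(t)})^T M^{(t)} y^{(t)}$, and Lemma~\ref{lem:QuadraticErrorInSubspaceU} applied to $z^{(t)} \in U^{(t)} \subseteq U_6^{(t)}$ (together with $\|y^{(t)}\|_2^2 \leq 1$) bounds this by $(16\delta^2/n)\sum_i w_i^{(t)}\lambda_i^2$.

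Finally, the constant term expands via $e^{-x} \leq 1 - x + x^2/2$ as
$$\sum_i \tilde{w}_i^{(t)} \;\leq\; \Phi^{(t)} \;-\; \sum_i w_i^{(t)} \beta_i \;+\; \tfrac{1}{2}\sum_i w_i^{(t)}\beta_i^2,$$
so the discount contributes roughly $-(4\delta^2/n)\sum_i w_i^{(t)}\lambda_i^2$, the higher-order $\beta_i^2$ piece being of order $\delta^2/n^2$ and hence negligible. Collecting everything reduces the lemma to verifying that the discount swallows the quadratic error. The main obstacle is precisely this constant balancing: the crude bounds above yield $16\delta^2/n$ on the error side versus $4\delta^2/n$ on the discount side, which is not enough. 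To close the gap one expects to use the remaining subspaces $U_3^{(t)}$ and $U_4^{(t)}$, which force $\alpha_i = 0$ for every $i \in I^{(t)}$ and for every $i$ with $\lambda_i \leq 1$; this should allow the Markov-type argument of Lemma~\ref{lem:QuadraticErrorInSubspaceU} to be run on a strictly smaller effective trace supported on the active constraints, sharpening the constant $16$. Once that sharpening is in place, the decay strictly dominates and the inequality $\Phi^{(t+1)} \leq \Phi^{(t)}$ follows immediately from the four-term expansion above.
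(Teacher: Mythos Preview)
Your argument up to the second-order expansion is exactly the paper's: apply $e^x \leq 1+x+x^2$ on $|x|\leq 1$, kill the linear term via $U_5^{(t)}$ (where, as you correctly note, the \emph{discounted} weights $\tilde w_i^{(t)}$ appear), and bound the quadratic term by Lemma~\ref{lem:QuadraticErrorInSubspaceU}. The only cosmetic difference is that the paper keeps $\rho_i := e^{-4\delta^2\lambda_i^2/n}$ as an unexpanded multiplicative factor rather than Taylor-expanding it separately as you do.

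The genuine gap is your proposed closing step. You suggest that $U_3^{(t)}$ and $U_4^{(t)}$, by forcing $\langle v_i,y^{(t)}\rangle=0$ for $i\in I^{(t)}$ and for $\lambda_i\leq 1$, will let you ``rerun the Markov argument of Lemma~\ref{lem:QuadraticErrorInSubspaceU} on a strictly smaller effective trace'' and thereby sharpen the constant $16$. This is not how that bound works. Those orthogonality conditions zero out some summands of the quadratic \emph{form} $(y^{(t)})^T M^{(t)} y^{(t)}$, but Lemma~\ref{lem:QuadraticErrorInSubspaceU} controls the form through the eigenvalues of the full matrix $M^{(t)}$ and its full trace $\mathrm{Tr}[M^{(t)}]=\sum_i w_i^{(t)}\lambda_i^2$; the subspace $U_6^{(t)}$ is built from the eigenvectors of $M^{(t)}$ itself. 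Dropping terms from the form neither reduces that trace nor improves the Markov constant, which is fixed at $16$ solely by the codimension budget $n/16$ allotted to $U_6^{(t)}$. In the paper $U_3^{(t)}$ and $U_4^{(t)}$ play no role in this lemma at all; they are used only afterwards, $U_3^{(t)}$ to bound $\max_i w_i^{(T)}$ and $U_4^{(t)}$ to ensure zero discrepancy on constraints with $\lambda_i\leq 1$.

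The paper closes the lemma not via $U_3^{(t)},U_4^{(t)}$ but by the direct termwise comparison $\rho_i + \tfrac{16}{n}\delta^2\lambda_i^2 \leq 1$, combining discount and quadratic error in a single line. (Your observation that the printed constants do not literally balance is correct; the fix is to adjust the discount exponent or the codimension of $U_6^{(t)}$, not to invoke $U_3^{(t)}$ or $U_4^{(t)}$.)
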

\begin{proof}
Let us abbreviate $\rho_i := \exp\left(-\frac{4\delta^2\lambda_i^2}{n}\right)$ as the \emph{discount factor}
for the $i$th constant. Note that in particular $0 < \rho_i \leq 1$ and $\rho_i \leq 1-\frac{2\delta^2 \lambda_i^2}{n}$. 
The change in one step can be analyzed as follows:
\begin{eqnarray*}
 \Phi^{(t+1)} 
 &=& \sum_{i=1}^m w_i^{(t+1)} = \sum_{i=1}^m w_i^{(t)} \cdot \exp\big(\lambda_i \delta \big<v_i,y^{(t)}\big>\big) \cdot \rho_i \\
&\stackrel{(*)}{\leq}& \sum_{i=1}^m w_i^{(t)} \cdot \Big( 1 + \lambda_i \delta \big<v_i,y^{(t)}\big> + \lambda_i^2 \delta^2 \big<v_i,y^{(t)}\big>^2 \Big) \cdot \rho_i \\
&=& \sum_{i=1}^m w_i^{(t)} \cdot \rho_i + \delta \underbrace{\Big<\sum_{i=1}^m \lambda_iw_i^{(t)}\rho_iv_i,y^{(t)}\Big>}_{=0\textrm{ since }y^{(t)} \in U_5^{(t)}} + \delta^2 \sum_{i=1}^m w_i^{(t)}\lambda_i^2\underbrace{\rho_i}_{\leq 1} \big<v_i,y^{(t)}\big>^2  \\
&\leq&  \sum_{i=1}^m w_i^{(t)} \cdot \rho_i + \delta^2 \cdot (y^{(t)})^TM^{(t)}y^{(t)} \stackrel{(**)}{\leq} \sum_{i=1}^m w_i^{(t)} \cdot \rho_i + \delta^2 \frac{16}{n}\sum_{i=1}^m w_i^{(t)} \lambda_i^2 \\
&\stackrel{(***)}{\leq}& \sum_{i=1}^m w_i^{(t)} = \Phi^{(t)}.
\end{eqnarray*}
In $(*)$, we use the inequality $e^x\leq 1+x+x^2$ for $|x| \leq 1$ together with the fact that $\lambda_i \delta |\left<v_i,y^{(t)}\right>| \leq \lambda_i \delta \leq 1$. In $(**)$ we bound $(y^{(t)})^T M^{(t)} y^{(t)}$ using Lemma~\ref{lem:QuadraticErrorInSubspaceU}.
In $(***)$ we finally use the fact that $\rho_i + \frac{16}{n}\delta^2 \leq 1$.
\end{proof}

Typically in the multiplicative weight update method one can only use the 
fact that $\max_{i\in [m]} w_i^{(t)} \leq \Phi^{(t)}$ which would lead to the loss
of an additional $\sqrt{\log n}$ factor. 
The trick in our approach is that there is always a \emph{linear} number of weights of order $\max_{i\in [m]} w_i^{(t)}$
since the updates are always chosen orthogonal to the $\frac{n}{16}$ constraints with highest weight.

\begin{lemma} At the end of the algorithm, $\max\{ w_i^{(T)} : i \in [m] \} \leq 2.$
\end{lemma}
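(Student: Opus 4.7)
The plan is to prove by induction on $t$ the stronger statement that $\max_{i\in[m]} w_i^{(t)} \leq 2$ holds throughout the execution, which yields the lemma at $t=T$. The base case is immediate since (7) together with the explicit expression \eqref{eq:ExplicitExpressionForWit} gives $w_i^{(0)} = \exp(-\lambda_i^2) \leq 1$.

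For the inductive step, assume $w_i^{(t)} \leq 2$ for all $i$ and split into two cases according to whether $i \in I^{(t)}$. The first case is the easy one: because $y^{(t)} \in U^{(t)} \subseteq U_3^{(t)}$, we have $\langle v_i, y^{(t)}\rangle = 0$ whenever $i \in I^{(t)}$, so the update in step~(6) reads $w_i^{(t+1)} = w_i^{(t)} \cdot \rho_i \leq w_i^{(t)} \leq 2$. This is the main conceptual content: by design, the $n/16$ heaviest weights cannot grow in a single step.

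For the second case, $i \notin I^{(t)}$, I would use a pigeonhole argument on the potential function. Since $I^{(t)}$ consists of the $n/16$ indices of largest weight, each $j \in I^{(t)}$ satisfies $w_j^{(t)} \geq w_i^{(t)}$, hence
\[
  \Phi^{(t)} \;\geq\; \sum_{j \in I^{(t)}} w_j^{(t)} \;\geq\; \tfrac{n}{16} \cdot w_i^{(t)}.
\]
Combined with Lemma~\ref{PotentialFunctionBounded} and $\Phi^{(0)} \leq n/32$, this gives $w_i^{(t)} \leq \tfrac{16}{n}\Phi^{(t)} \leq \tfrac{1}{2}$. One step can then amplify $w_i$ by at most $\exp(\lambda_i \delta \langle v_i, y^{(t)}\rangle)\rho_i \leq e$, using $\|y^{(t)}\|_2 \leq 1$, $\|v_i\|_2=1$, $\lambda_i \delta \leq \lambda_1 \delta = 1$, and $\rho_i \leq 1$. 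Therefore $w_i^{(t+1)} \leq \tfrac{e}{2} < 2$, closing the induction.

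There is no serious obstacle; the only mild subtlety is keeping track of the two regimes and making sure the loose factor of $e$ from the single-step exponential increase is absorbed by the factor of $2$ slack one gets from the averaging over $I^{(t)}$. In fact, this is precisely where the constants $n/16$ in the definition of $I^{(t)}$ and $n/32$ in the hypothesis of Theorem~\ref{thm:DeterministicLovettMeka} are used: any choice with $|I^{(t)}|/n > \Phi^{(0)}/(2n)$ would suffice to drive the inductive bound through.
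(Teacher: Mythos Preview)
Your proof is correct and uses the same three ingredients as the paper's: (i) weights with index in $I^{(t)}$ cannot grow because $y^{(t)}\in U_3^{(t)}$, (ii) any weight outside $I^{(t)}$ is at most $\tfrac{16}{n}\Phi^{(t)}\le\tfrac12$ by averaging, and (iii) a single update can blow up a weight by at most a factor~$e$. The only difference is packaging: the paper argues by contradiction, picking the last time $t^*$ at which the offending index was outside $I^{(t)}$ and working backward, whereas you run a forward induction on $t$; both arrive at the same numerical comparison $\tfrac{n}{16}\cdot\tfrac{2}{e}>\tfrac{n}{32}$. (Minor quibble: your reference to step~(7) for the base case should be step~(1).)
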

\begin{proof}
Suppose, for contradiction, that $w_i^{(T)} > 2$ for some $i$. Let $t^*$ be the last iteration when $i$ was not among the $\frac{n}{16}$ constraints with highest weight. After iteration $t^*+1$, $w_i^{(t)}$ only decreases in each iteration, due to the factor $\exp\left(-\frac{4\delta^2\lambda_i^2}{n}\right)$. Then
\[
2 < w_i^{(T)} = w_i^{(t^* + 1)} = w_i^{(t^*)} \cdot \underbrace{\exp(\lambda_i \cdot \delta \cdot \big<v_i, y^{(t)}\big>)}_{\leq e} \cdot \underbrace{\rho_i}_{\leq 1} \leq  w_i^{(t^*)}\cdot e,
\]
and hence,
$
w_i^{(t^*)} > \frac{2}{e}.
$
This would imply that
$
\Phi^{(t^*)} \geq \frac{n}{16} \cdot \frac{2}{e} > \frac{n}{32},
$
contradicting Lemma~\ref{PotentialFunctionBounded}.
\end{proof}

\begin{lemma}
If $w_i^{(T)} \leq 2$, then $\left<v_i,x^{(T)}-x^{(0)}\right> \leq 11 \lambda_i$.
\end{lemma}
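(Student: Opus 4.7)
The plan is to extract the bound directly from the explicit formula \eqref{eq:ExplicitExpressionForWit} for $w_i^{(T)}$, bound the number of iterations via Lemma~\ref{lem:BoundOnNumOfIterations}, and then do a short case split on $\lambda_i$.

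First I would take logarithms of $w_i^{(T)} \leq 2$ using \eqref{eq:ExplicitExpressionForWit} to get
\[
 \lambda_i \bigl<v_i, x^{(T)} - x^{(0)}\bigr> \;\leq\; \ln 2 + \lambda_i^2 \Bigl(1 + T \cdot \tfrac{4\delta^2}{n}\Bigr).
\]
Next I would invoke Lemma~\ref{lem:BoundOnNumOfIterations}, which gives $T \leq \tfrac{2n}{\delta^2}$, so that $T \cdot \tfrac{4\delta^2}{n} \leq 8$. Plugging this in yields
\[
\lambda_i \bigl<v_i, x^{(T)} - x^{(0)}\bigr> \;\leq\; \ln 2 + 9\lambda_i^2.
\]

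The proof then splits into two cases depending on $\lambda_i$. If $\lambda_i > 1$, dividing through by $\lambda_i$ gives $\bigl<v_i, x^{(T)} - x^{(0)}\bigr> \leq \tfrac{\ln 2}{\lambda_i} + 9\lambda_i \leq \ln 2 + 9\lambda_i \leq 11 \lambda_i$, as desired. If instead $0 \leq \lambda_i \leq 1$, the algebraic bound would be vacuous because $\tfrac{\ln 2}{\lambda_i}$ blows up; this is precisely the regime where the subspace $U_4^{(t)}$ is needed. Since $y^{(t)} \in U_4^{(t)}$ in every iteration, we have $\bigl<v_i, y^{(t)}\bigr> = 0$ for every $t$, so telescoping $x^{(T)} - x^{(0)} = \delta \sum_{t=0}^{T-1} y^{(t)}$ yields $\bigl<v_i, x^{(T)} - x^{(0)}\bigr> = 0 \leq 11\lambda_i$.

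I don't anticipate any real obstacle; the only point to notice is that the $U_4^{(t)}$ ingredient in the subspace construction exists exactly to rescue the small-$\lambda_i$ regime, where the logarithmic weight bound alone would be insufficient.
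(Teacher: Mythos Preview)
Your proposal is correct and follows essentially the same approach as the paper's proof: handle $\lambda_i \leq 1$ via the orthogonality built into $U_4^{(t)}$, and for $\lambda_i > 1$ take logarithms in \eqref{eq:ExplicitExpressionForWit}, bound $T\delta^2/n$ via Lemma~\ref{lem:BoundOnNumOfIterations}, and divide by $\lambda_i$. The only cosmetic difference is the order of the case split.
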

\begin{proof}
First note that the algorithm always walks orthogonal to all constraint vectors $v_i$ if $\lambda_i \leq 1$
and in this case $\left<v_i,x^{(T)} - x^{(0)}\right> = 0$. Now suppose that $\lambda_i > 1$. 
We know that $w_i^{(T)} \stackrel{\eqref{eq:ExplicitExpressionForWit}}{=} \exp\Big(\lambda_i\cdot \big<v_i,x^{(T)}-x^{(0)}\big> - \lambda_i^2 \cdot \Big(1 + 4 \cdot T \cdot \frac{\delta^2}{n}\Big) \Big) \leq 2.$
Taking logarithms on both sides and dividing by $\lambda_i$ then gives
\[
\big<v_i,x^{(T)} - x^{(0)}\big> \leq  \underbrace{\frac{\log(2)}{\lambda_i}}_{\leq 2} + \lambda_i \Big(1 + 4 \underbrace{T \frac{\delta^2}{n}}_{\leq 2}\Big) \leq 11\lambda_i.
\]
This lemma concludes the proof of Theorem \ref{thm:DeterministicLovettMeka}.
\end{proof}

\subsection{Application to set coloring}

Now we come to the main application of the partial coloring argument from Theorem~\ref{thm:DeterministicLovettMeka}, which is 
to color set systems:
\begin{lemma}\label{lem:SetSystemPartial}
Given a set system $S_1,\ldots,S_m \subseteq [n]$, we can find a coloring $x \in \{ -1,1\}^n$ with $|\sum_{j \in S_i} x_j| \leq O(\sqrt{n \log \frac{2m}{n}})$ for every $i$ deterministically in time $O\big(n^3m\log(\frac{2m}{n})\big)$. 
\end{lemma}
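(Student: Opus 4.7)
The plan is to iteratively apply Theorem \ref{thm:DeterministicLovettMeka} to the normalized indicator vectors of the sets, halving the number of active variables at each step and summing the discrepancy contributions as a convergent geometric series. Concretely, normalize each set by $v_i := \mathbf{1}_{S_i}/\sqrt{|S_i|}$, and encode the two-sided bound $|\sum_{j \in S_i} x_j| \leq \tau$ by including both $v_i$ and $-v_i$, giving $2m$ unit constraint vectors.

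Starting from $x^{(0)} = 0$, I maintain at step $k$ the active set $A_k := \{j : x^{(k)}_j \in (-1,1)\}$ of size $n_k := |A_k|$, with $A_0 = [n]$ and $n_0 = n$. I restrict the set system to $A_k$, set the common parameter
\[
  \lambda^{(k)} := 4\sqrt{\ln(64m/n_k)} = O\bigl(\sqrt{\log(2m/n_k)}\bigr),
\]
chosen so that $2m \exp(-(\lambda^{(k)})^2/16) \leq n_k/32$, and apply Theorem \ref{thm:DeterministicLovettMeka} on $\setR^{A_k}$ with starting point $x^{(k)}|_{A_k}$. This produces $x^{(k+1)}$ in which at least $n_k/2$ coordinates of $A_k$ are newly rounded to $\pm 1$ and $|\langle v_i, x^{(k+1)} - x^{(k)}\rangle| \leq 8\lambda^{(k)}$ for every $i$. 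Extending $x^{(k+1)}$ back to $[n]$ by copying the frozen coordinates, we have $n_{k+1} \leq n_k/2$, so the loop terminates after $K = O(\log n)$ rounds with a full coloring $x \in \{-1,1\}^n$ (the final $O(1)$ leftover coordinates, if any, are rounded arbitrarily and contribute $O(1)$ to each set's discrepancy).

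The contribution of iteration $k$ to the discrepancy of $S_i$ is
\[
 \Big|\sum_{j \in S_i} \bigl(x^{(k+1)}_j - x^{(k)}_j\bigr)\Big| = \sqrt{|S_i \cap A_k|}\,\bigl|\langle v_i, x^{(k+1)} - x^{(k)}\rangle\bigr| \leq 8\lambda^{(k)}\sqrt{n_k}.
\]
Using $n_k \leq n/2^k$ and $\log(2m/n_k) \leq \log(2m/n) + k$, the telescoped bound is
\[
  \sum_{k=0}^{K} 8\lambda^{(k)}\sqrt{n_k} = O\Bigl(\sum_{k \geq 0}\sqrt{(n/2^k)\bigl(\log(2m/n)+k\bigr)}\Bigr) = O\bigl(\sqrt{n\log(2m/n)}\bigr),
\]
since both $\sum_k \sqrt{n/2^k}$ and $\sum_k \sqrt{k/2^k}$ converge. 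The running time of iteration $k$ is $O(n_k^3 m (\lambda^{(k)})^2) = O(n_k^3 m \log(2m/n_k))$ by Theorem \ref{thm:DeterministicLovettMeka}; summing this is again geometric in $n_k^3$ and hence dominated by iteration $0$, giving the claimed $O(n^3 m \log(2m/n))$.

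The only nontrivial step is the final summation: one must check that the mildly growing factor $\log(2m/n_k) = \log(2m/n)+O(k)$ is outpaced by the geometric decay of $n_k$, which it is. Everything else is a direct invocation of Theorem \ref{thm:DeterministicLovettMeka} together with the standard partial-coloring-to-full-coloring reduction.
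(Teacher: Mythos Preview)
Your argument is correct and follows essentially the same partial-coloring iteration as the paper, including the same choice of $\lambda$ and the same geometric summation for both the discrepancy and the running time. One small notational slip: your initial definition $v_i := \mathbf{1}_{S_i}/\sqrt{|S_i|}$ is inconsistent with the later equality involving $\sqrt{|S_i\cap A_k|}$; the calculation (and the paper) requires re-normalizing in each phase, e.g.\ taking $v_i^{(k)} := \mathbf{1}_{S_i\cap A_k}/\sqrt{|S_i\cap A_k|}$ (or, as the paper does, $\mathbf{1}_{S_i\cap A_k}/\sqrt{n_k}$), which is clearly what you intend.
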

\begin{proof}
For a fractional vector $x$, let us abbreviate $\textrm{disc}(S,x) := |\sum_{j \in S} x_j|$ as the discrepancy with respect to set $S$.
Set $x^{(0)} := \bm{0}$. For $s=1,\ldots,\log_2(n)$ many phases we do the following. Let $A^{(s)} := \{ i \in [n] : -1 < x^{(s-1)}_i < 1\}$
be the not yet fully colored elements. Define a vector $v_i := \frac{1}{\sqrt{|A^{(s)}|}} \bm{1}_{S_i \cap A^{(s)}}$
of length $\|v_i\|_2 \leq 1$ with
parameters $\lambda_i := C\sqrt{\log(\frac{2m}{|A^{(s)}|})}$. Then apply Theorem~\ref{thm:DeterministicLovettMeka} to 
find $x^{(s)} \in [-1,1]^n$ with $\textrm{disc}(S_i,x^{(s)}-x^{(s-1)}) \leq O(\sqrt{|A^{(s)}| \log(\frac{2m}{|A^{(s)}|})})$ such that $x^{(s)}_i=x^{(s-1)}_i$ for $i\not\in A^{(s)}$. Since each time at least half of the elements get fully colored we have $|A^{(s)}| \leq 2^{-(s-1)}n$ for all $s$.
Then 
$x := x^{(\log_2 n)} \in \{ -1,1\}^n$ and 
\[
 \textrm{disc}(S_i,x) \leq \sum_{s \geq 1} O\Big(\sqrt{2^{-(s-1)}n \log\Big(\frac{2m}{2^{-(s-1)}n)}\Big)}\Big) \leq O\Big(\sqrt{n \log(\tfrac{2m}{n}})\Big)
\]
using that this convergent sequence is dominated by the first term.

In each application of Theorem~\ref{thm:DeterministicLovettMeka} one has $\delta \geq \Omega(1 / \sqrt{\log(\frac{2m}{n})})$. Thus phase $s$ runs for $O(2^{-(s-1)}n/\delta^2)=O(2^{-(s-1)}n\log(\frac{2m}{n}))$ iterations, each of which takes $O((2^{-(s-1)}n)^2m)$ time. This gives a total runtime of $O((2^{-(s-1)}n)^3m\log(\frac{2m}{n}))$ in phase $s$. Summing the geometric series for $s=1,\ldots,\log_2 n$ results in a total running time of $O(n^3m \log(\frac{2m}{n}))$.
\end{proof}
By setting $m=n$ in Lemma \ref{lem:SetSystemPartial}, we obtain Corollary \ref{lem:RunningTimeSpencersTheorem}.

\section{Matrix balancing}

In this section we prove Theorem~\ref{thm:MatrixBalancing}. We begin with some preliminaries. For matrices $A,B \in \setR^{n \times n}$, 
let $A \bullet B := \sum_{i=1}^n \sum_{j=1}^n A_{ij} \cdot B_{ij}$ be the \emph{Frobenius inner product}.
Recall that any symmetric matrix $A \in \setR^{n \times n}$ 
can be written as $A = \sum_{j=1}^n \mu_ju_ju_j^T$, where $\mu_j$ is the eigenvalue corresponding to eigenvector $u_j$. The \emph{trace} of $A$ is 
$\textrm{Tr}[A] = \sum_{i=1}^n A_{ii} = \sum_{j=1}^n \mu_j$ and for symmetric matrices $A,B$ one has $\textrm{Tr}[AB] = A \bullet B$.
If $A$ has only nonnegative eigenvalues, we say that $A$ is \emph{positive semidefinite} and write $A \succeq 0$. Recall that $A \succeq 0$ if and only
if $y^TAy \geq 0$ for all $y \in \setR^n$. 
For a symmetric matrix $A$, we denote $\mu_{\max} := \max\{ \mu_j : j=1,\ldots,n\}$
as the largest Eigenvalue and $\|A\|_{\textrm{op}} := \max\{ |\mu_j| : j=1,\ldots,n\}$ as the largest singular value. Note that if $A \succeq 0$, then $|A\bullet B| \leq \textrm{Tr}[A] \cdot \|B\|_{\textrm{op}}$. If $A,B \succeq 0$, then $A \bullet B \geq 0$. Finally, note that for any symmetric matrix $A$ one has $A^2 := AA \succeq 0$.

From the eigendecomposition $A = \sum_{j=1}^n \mu_ju_ju_j^T$, one can easily show that the maximum singular value also satisfies
$\|A\|_{\textrm{op}} = \max\{ \|Ay\|_2 : \|y\|_2 = 1\}$ and $\|A\|_{\textrm{op}} = \max\{ |y^TAy| : \|y\|_2 = 1\}$.
For any function $f : \setR \to \setR$ we define $f(A) := \sum_{j=1}^n f(\mu_j) u_ju_j^T$
to be the symmetric matrix that is obtained by applying $f$ to all Eigenvalues. In particular 
we will be interested in the \emph{matrix exponential} $\exp(A) := \sum_{j=1}^n e^{\mu_j} u_ju_j^T$.
For any symmetric matrices $A,B \in \setR^n$, the \emph{Golden-Thompson inequality} says that
$\textrm{Tr}[\exp(A + B)] \leq \textrm{Tr}[\exp(A)\exp(B)]$. (It is not hard to see that for diagonal matrices one has equality.) We refer to the textbook of Bhatia~\cite{MatrixAnalysisBhatia1997}
for more details.
\begin{theorem}\label{thm:MatrixBalancingPartial}
Let $A_1,\ldots,A_n \in \setR^{m \times m}$ be $q$-block diagonal matrices with $\|A_i\|_{\textrm{op}} \leq 1$
for $i=1,\ldots,m$ and let $x^{(0)} \in [-1,1]^n$ be a starting point. Then there is a deterministic
algorithm that finds an $x \in [-1,1]^n$ with 
\[
  \Big\|\sum_{i=1}^n (x_i-x^{(0)}_i) \cdot A_i \Big\|_{\textrm{op}} \leq O\Big(\sqrt{n \log\Big(\frac{2qm}{n}\Big)}\Big)
\]
in time $O(n^5 + n^4m^3)$. Moreover, at least $\frac{n}{2}$ coordinates of $x$ will be in $\{ -1,1\}$.
\end{theorem}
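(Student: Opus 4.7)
The plan is to mirror the scalar algorithm of Theorem~\ref{thm:DeterministicLovettMeka}, replacing scalar weights with $q\times q$ matrix weights and the scalar potential with a sum of traces of matrix exponentials. Since $\sum_i (x_i^{(t)}-x_i^{(0)})A_i$ is $q$-block diagonal, its operator norm equals $\max_{B\in[m/q],\sigma\in\{\pm 1\}}\mu_{\max}(\sigma M_B^{(t)})$, where $M_B^{(t)}$ is the $B$-th diagonal block of $\sum_i (x_i^{(t)}-x_i^{(0)})A_i$. I would therefore maintain one $q\times q$ matrix weight
\[
W_{B,\sigma}^{(t)} := \exp\Big(\sigma\lambda M_B^{(t)} - \big(\lambda^2 f + c\cdot t\delta^2\lambda^2/n\big)I_q\Big)
\]
for each of the $2(m/q)$ pairs $(B,\sigma)$, and use the scalar potential $\Phi^{(t)} := \sum_{B,\sigma}\textrm{Tr}[W_{B,\sigma}^{(t)}]$. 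Parameters are set as $\lambda = \Theta(\sqrt{\log(2qm/n)/n})$, $\delta = 1/\sqrt{n}$ (so that $\lambda\delta\sqrt{n}=\lambda\leq 1$), and $f$ chosen so that $\Phi^{(0)} = 2me^{-\lambda^2 f}\leq n/32$, which forces $\lambda^2 f = \Theta(\log(qm/n))$.

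The step direction $z^{(t)}$ is drawn from $U^{(t)} = U_1^{(t)}\cap\cdots\cap U_6^{(t)}$, generalizing the scalar construction. $U_1^{(t)}, U_2^{(t)}$ are unchanged. The single linear condition that kills the first-order change of $\Phi^{(t)}$ becomes
\[
U_5^{(t)} := \Big\{y : \sum_i y_i\sum_{B,\sigma}\sigma\cdot\textrm{Tr}[W_{B,\sigma}^{(t)}(A_i)_B]=0\Big\}.
\]
$U_3^{(t)}$ imposes $n/16$ analogous conditions, one per pair $(B,\sigma)$ among the $n/16$ heaviest (by $\textrm{Tr}[W_{B,\sigma}^{(t)}]$). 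The quadratic subspace $U_6^{(t)}$ is the span of the bottom $15n/16$ eigenvectors of the $n\times n$ PSD matrix
\[
Q^{(t)}_{ii'} := \sum_{B,\sigma}\textrm{Tr}\Big[W_{B,\sigma}^{(t)}\cdot\tfrac12\big((A_i)_B(A_{i'})_B + (A_{i'})_B(A_i)_B\big)\Big],
\]
so that each $y\in U_6^{(t)}$ satisfies $y^T Q^{(t)}y \leq (16/n)\textrm{Tr}[Q^{(t)}]$; a dimension count as in the scalar case gives $\dim U^{(t)}\geq n/8$.

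The core calculation is to show $\Phi^{(t+1)}\leq\Phi^{(t)}$. For each $(B,\sigma)$, apply the Golden--Thompson inequality to get
\[
\textrm{Tr}[W_{B,\sigma}^{(t+1)}]\leq e^{-c\delta^2\lambda^2/n}\cdot\textrm{Tr}\Big[W_{B,\sigma}^{(t)}\cdot\exp\Big(\sigma\lambda\delta\sum_i y_i^{(t)}(A_i)_B\Big)\Big].
\]
Since $\|\sigma\lambda\delta\sum_i y_i (A_i)_B\|_{\textrm{op}}\leq \lambda\delta\sqrt{n}=\lambda\leq 1$, the matrix Taylor inequality $\exp(H)\preceq I+H+H^2$ applies. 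Summing over $(B,\sigma)$, the linear $\textrm{Tr}[WH]$ terms vanish by $U_5^{(t)}$, and the quadratic $\textrm{Tr}[WH^2]$ terms add up to $\lambda^2\delta^2(y^{(t)})^TQ^{(t)}y^{(t)}\leq(16\lambda^2\delta^2/n)\textrm{Tr}[Q^{(t)}]$. The trace bound $\textrm{Tr}[Q^{(t)}]\leq n\Phi^{(t)}$ (using $\|(A_i)_B\|_{\textrm{op}}\leq 1$ and the cyclic estimate $\textrm{Tr}[AB]\leq \textrm{Tr}[A]\|B\|_{\textrm{op}}$ for $A\succeq 0$) then gives a quadratic growth of at most $16\lambda^2\delta^2\Phi^{(t)}$, which is absorbed by the discount $e^{-c\delta^2\lambda^2/n}$ when $c$ is taken large enough, in direct parallel with Lemma~\ref{PotentialFunctionBounded}.

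The main obstacles are two-fold. First, the products $(A_i)_B(A_{i'})_B$ are not symmetric, so the quadratic form must be stated with its symmetric part, and one invokes cyclic invariance of trace together with $W_{B,\sigma}^{(t)}\succeq 0$ to derive the trace bound. Second, the scalar ``max weight $\leq 2$'' lemma needs a matrix analog: using $U_3^{(t)}$ to kill the linear change in $\textrm{Tr}[W_{B,\sigma}^{(t)}]$ while $(B,\sigma)$ is among the top $n/16$ pairs, combined with a careful bound on the individual quadratic growth, one arrives at $\textrm{Tr}[W_{B,\sigma}^{(T)}]\leq O(1)$ for every pair. Granting these, the length-increase argument of Lemma~\ref{lem:BoundOnNumOfIterations} gives $T = O(n/\delta^2) = O(n^2)$ iterations, the individual trace bound yields $\sigma\lambda\mu_{\max}(M_B^{(T)})\leq O(1)+\lambda^2 f + O(\lambda^2) = O(\log(2qm/n))$, and dividing by $\lambda$ gives $\|\sum_i(x_i^{(T)}-x_i^{(0)})A_i\|_{\textrm{op}}\leq O(\sqrt{n\log(2qm/n)})$. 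The per-iteration cost is dominated by forming $Q^{(t)}$ ($O(n^2 m q^2)$) and its eigendecomposition ($O(n^3)$); summed over $O(n^2)$ iterations this gives the claimed $O(n^5+n^4 m^3)$ runtime.
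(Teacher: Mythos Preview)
Your overall approach mirrors the paper's closely: the matrix-exponential potential, Golden--Thompson, the matrix Taylor bound $\exp(H)\preceq I+H+H^2$, the quadratic-error subspace (your $U_6^{(t)}$, the paper's $U_5^{(t)}$) built from eigenvectors of the $n\times n$ matrix $Q^{(t)}_{ii'}=W\bullet A_iA_{i'}$, and the length/termination argument are all essentially the same. Tracking both signs $\sigma\in\{\pm1\}$ and inserting a per-step discount factor (instead of letting $\Phi^{(t)}$ grow by $(1+O(\varepsilon^2\delta^2))$ per step as the paper does) are harmless variations.

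The genuine gap is in your $U_3^{(t)}$ and the ``max weight $\leq O(1)$'' step. You impose only \emph{one} scalar constraint per heavy pair $(B,\sigma)$, namely $\sum_i y_i\,\textrm{Tr}[W_{B,\sigma}^{(t)}(A_i)_B]=0$. This kills the first-order change of $\textrm{Tr}[W_{B,\sigma}^{(t)}]$ but not the second-order change. For an individual block one has only the crude estimate $\textrm{Tr}[W_{B,\sigma}^{(t)}H_B^2]\leq \textrm{Tr}[W_{B,\sigma}^{(t)}]\cdot\|H_B\|_{\textrm{op}}^2\leq \lambda^2\delta^2 n\cdot\textrm{Tr}[W_{B,\sigma}^{(t)}]$, while your discount per step is $e^{-c\delta^2\lambda^2/n}$, smaller by a factor of $n^2$; it cannot absorb this growth over $T=\Theta(n^2)$ steps. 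Your $U_6^{(t)}$ bound controls only the \emph{sum} of the quadratic contributions across all blocks, and nothing prevents that sum from concentrating on a single heavy block. So the promised ``careful bound on the individual quadratic growth'' is not available from the constraints you set up, and $\textrm{Tr}[W_{B,\sigma}^{(T)}]\leq O(1)$ does not follow.

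The paper's remedy is to impose the full matrix constraint $\sum_i y_i\,(A_i)_B=0$ for each heavy block $B$, which completely freezes $M_B^{(t)}$ (hence $\Phi_B^{(t)}$) while $B$ is heavy. This costs $\binom{q+1}{2}\leq q^2$ scalar constraints per block, so one can only afford to freeze the top $n/(16q^2)$ blocks; the dimension count still closes since $(n/(16q^2))\cdot q^2=n/16$. With the block frozen, $\Phi_B^{(T)}=\Phi_B^{(t^*+1)}\leq e\cdot\Phi_B^{(t^*)}$, and at time $t^*$ there are $n/(16q^2)$ heavier blocks, forcing $\Phi_B^{(t^*)}\leq (16q^2/n)\,\Phi^{(t^*)}$. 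The resulting extra $\log(q^2)$ term is absorbed into the $O(\sqrt{n\log(2qm/n)})$ bound. If you replace your $U_3^{(t)}$ by this stronger constraint, the rest of your outline goes through.
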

Our algorithm computes a sequence
of iterates $x^{(0)},\ldots,x^{(T)}$ such that $x^{(T)}$ is the desired vector $x$ with half of the coordinates being integral. 
In our algorithm the step size is $\delta=\frac{1}{\sqrt{n}}$ and we use a parameter $\varepsilon=\frac{1}{\sqrt{n}}$ to control the scaling of the following potential function:
\[
 \Phi^{(t)} := \textrm{Tr}\Big[\exp\Big(\varepsilon \sum_{i=1}^n (x^{(t)}_i-x^{(0)}_i) \cdot A_i\Big)\Big].
\]
Suppose $B_{i,k} \in \setR^{q \times q}$ are symmetric matrices so that $A_i = \textrm{diag}(B_{i,1},\ldots,B_{i,m/q})$. Then we can decompose the weight function as $ \Phi^{(t)} = \sum_{k=1}^{m/q} \Phi^{(t)}_k$ with $\quad \Phi^{(t)}_k := \textrm{Tr}\Big[\exp\Big(\varepsilon \sum_{i=1}^n (x_i^{(t)} - x_i^{(0)})  B_{i,k}\Big)\Big].$
In other words, the potential function is simply the sum of the potential function applied to each individual block.
The algorithm is as follows:
\begin{enumerate}
\item[(1)] FOR $t = 0$ TO $\infty$ DO
  \begin{enumerate}
  \item[(2)] Define weight matrix $W^{(t)} := \exp(\varepsilon \sum_{i=1}^n (x^{(t)}_i - x^{(0)}_i) A_i)$ 
  \item[(3)] Define the following subspaces
\begin{itemize*}
\item $U_1^{(t)} := \textrm{span}\{ e_j : -1 < x^{(t)}_j < 1\}$
\item $U_2^{(t)} := \{ x \in \setR^n\mid \big<x,x^{(t)}\big> = 0\}$
\item $U_3^{(t)} := \{ x \in \setR^n \mid \sum_{i=1}^n x_iB_{i,k} = \bm{0} \; \forall k \in I^{(t)}\}$. Here $I^{(t)} \subseteq [m]$ are the $|I^{(t)}| = \frac{1}{16} \cdot \frac{n}{q^2}$ indices $k$ with maximum weight $\Phi_k^{(t)}$.
\item $U_4^{(t)} := \{ x \in \setR^n \mid \sum_{i=1}^n x_i \cdot (W^{(t)} \bullet A_i) = 0\}$
\item $U_5^{(t)}$ is the subspace defined in Lemma~\ref{lem:SubspaceWithBoundedMatrixQuadraticError}, with $k=16$.
\item $U^{(t)} := U_1^{(t)} \cap \ldots \cap U_5^{(t)}$
\end{itemize*}
  \item[(4)] Let $z^{(t)}$ be any unit vector in $U^{(t)}$.
  \item[(5)] Choose a maximal $\alpha^{(t)} \in (0,1]$ so that $x^{(t+1)}:= x^{(t)} + \delta \cdot y^{(t)} \in [-1,1]^n$, where $y^{(t)} = \alpha^{(t)}z^{(t)}$.
  \item[(6)] Let $A^{(t)} := \{ j \in [n] : -1 < x_j^{(t)} < 1 \}$. If $|A^{(t)}| < \frac{n}{2}$, then set $T := t$ and stop.
  \end{enumerate}
\end{enumerate}
The analysis of our algorithm follows a sequence of lemmas, the proofs of most of which we defer to Appendix \ref{MatrixProofs}. By exactly the same arguments as in Lemma ~\ref{lem:BoundOnNumOfIterations} we know that the algorithm terminates after $T \leq \frac{2n}{\delta^2}$
iterations. Each iteration can be done in time $O(n^2m^3 + n^3)$ (c.f. Lemma~\ref{lem:SubspaceWithBoundedMatrixQuadraticError}).
\begin{lemma}
In each iteration $t$ one has $\dim(U^{(t)}) \geq \frac{n}{4}$.
\end{lemma}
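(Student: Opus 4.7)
The plan is the same codimension count used in the scalar case, but now with attention to the fact that the constraints defining $U_3^{(t)}$ are symmetric matrix equations, each contributing $\binom{q+1}{2}$ scalar linear constraints rather than one. I would first bound $\dim(U_1^{(t)})$: since the algorithm has not yet terminated we have $|A^{(t)}| \geq \tfrac{n}{2}$, so $\dim(U_1^{(t)}) = |A^{(t)}| \geq \tfrac{n}{2}$. Then I would repeatedly use the inequality $\dim(V \cap W) \geq \dim V + \dim W - n$ (equivalently, the codimension of an intersection is at most the sum of codimensions), to obtain
\[
\dim(U^{(t)}) \;\geq\; \dim(U_1^{(t)}) - \mathrm{codim}(U_2^{(t)}) - \mathrm{codim}(U_3^{(t)}) - \mathrm{codim}(U_4^{(t)}) - \mathrm{codim}(U_5^{(t)}).
\]

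Next I would tally the individual codimensions. $U_2^{(t)}$ and $U_4^{(t)}$ each impose a single scalar linear equation on $x \in \setR^n$, so each contributes codimension at most $1$. For $U_3^{(t)}$, each index $k \in I^{(t)}$ enforces $\sum_{i=1}^n x_i B_{i,k} = 0$ in the space of symmetric $q \times q$ matrices, which has dimension $\binom{q+1}{2} = \tfrac{q(q+1)}{2}$; since $|I^{(t)}| = \tfrac{n}{16 q^2}$, the total codimension contribution is at most $\tfrac{n}{16 q^2} \cdot \tfrac{q(q+1)}{2} = \tfrac{n(q+1)}{32 q} \leq \tfrac{n}{16}$, where the last inequality uses $q \geq 1$. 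For $U_5^{(t)}$, by reference to Lemma~\ref{lem:SubspaceWithBoundedMatrixQuadraticError} with $k=16$, the codimension is at most $\tfrac{n}{16}$ (analogous to $U_6^{(t)}$ in the first algorithm: discarding the top $\tfrac{1}{16}$ fraction of eigenvectors of the matrix analogue of $M^{(t)}$).

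Putting the pieces together,
\[
\dim(U^{(t)}) \;\geq\; \frac{n}{2} - 1 - \frac{n}{16} - 1 - \frac{n}{16} \;=\; \frac{3n}{8} - 2 \;\geq\; \frac{n}{4},
\]
assuming $n$ is at least a small absolute constant (say $n \geq 16$). The only genuinely non-routine part of this argument is the matrix accounting for $U_3^{(t)}$; once one observes that the symmetry of the blocks $B_{i,k}$ makes each matrix equation equivalent to $\tfrac{q(q+1)}{2}$ scalar equations, the deliberate choice $|I^{(t)}| = \tfrac{1}{16}\cdot \tfrac{n}{q^2}$ in step~(3) is exactly calibrated so that this codimension remains $O(n)$ independently of $q$, and the remainder of the proof is the same codimension bookkeeping as in the vector version.
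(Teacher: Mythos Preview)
Your proof is correct and follows essentially the same codimension-counting approach as the paper. The only difference is cosmetic: you exploit the symmetry of the blocks $B_{i,k}$ to count $\binom{q+1}{2}$ scalar constraints per index in $I^{(t)}$, whereas the paper more crudely counts $q^2$ constraints per index; both bounds yield codimension at most $\tfrac{n}{16q^2}\cdot q^2 = \tfrac{n}{16}$ for $U_3^{(t)}$ and lead to the identical final estimate $\tfrac{3n}{8}-2 \geq \tfrac{n}{4}$ for $n \geq 16$.
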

\begin{proof}
  We simply need to account for all linear constraints that define $U^{(t)}$
  and we get  
  \[
    \dim(U^{(t)}) \geq \underbrace{|A^{(t)}|}_{U_1^{(t)}} - \underbrace{|I^{(t)}|}_{U_3^{(t)}} -\underbrace{\frac{n}{16}}_{U_5^{(t)}} - \underbrace{2}_{U_2^{(t)},U_4^{(t)}} \geq \frac{n}{2} - \frac{n}{16q^2}\cdot q^2 - \frac{n}{16} - 2 \geq \frac{n}{4}
  \]
  assuming that $n \geq 16$.
\end{proof}

To analyze the behavior of the potential function, we first prove the existence of a
suitable subspace $U_5^{(t)}$ that will bound the quadratic error term. 
\begin{lemma} \label{lem:SubspaceWithBoundedMatrixQuadraticError}
Let $W \in \setR^{m \times m}$ be a symmetric positive semidefinite matrix, let $A_1,\ldots,A_n \in \setR^{m \times m}$
be symmetric matrices with $\|A_i\|_{\textrm{op}} \leq 1$ and let $k>0$ be a parameter. Then in time $O(n^2m^3+n^3)$ one can compute a subspace $U \subseteq \setR^n$
of dimension $\dim(U) \geq (1-\frac{1}{k})n$ so that 
\begin{equation} \label{eq:MatrixQuadraticErrorIneq}
  W \bullet \Big(\sum_{i=1}^n y_iA_i\Big)^2 \leq k\cdot \textrm{Tr}[W] \quad \forall y \in U\textrm{ with }\|y\|_2 = 1.
\end{equation}
\end{lemma}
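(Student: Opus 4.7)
The plan is to mirror the argument of Lemma~\ref{lem:QuadraticErrorInSubspaceU}: encode the quadratic form $y \mapsto W \bullet (\sum_i y_i A_i)^2$ as $y^T Q y$ for a suitable symmetric PSD matrix $Q \in \setR^{n \times n}$, bound $\textrm{Tr}[Q]$, and then use Markov's inequality on the eigenvalues of $Q$ to extract the subspace $U$.

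Concretely, I would define $Q \in \setR^{n \times n}$ by $Q_{ij} := \textrm{Tr}[W A_i A_j]$. Symmetry of $Q$ follows because $A_iA_j - A_jA_i$ is skew-symmetric while $W$ is symmetric, so $\textrm{Tr}[W(A_iA_j - A_jA_i)] = 0$. To see $Q \succeq 0$ and that it represents the desired quadratic form, set $X := \sum_i y_i A_i$ (symmetric) and expand:
\[
  y^T Q y = \sum_{i,j} y_i y_j \, \textrm{Tr}[W A_i A_j] = W \bullet X^2 = \textrm{Tr}[X W X] = \|W^{1/2} X\|_F^2 \geq 0,
\]
which is exactly the left-hand side of \eqref{eq:MatrixQuadraticErrorIneq}.

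Next, $\textrm{Tr}[Q] = \sum_{i=1}^n W \bullet A_i^2$, and the preliminary inequality $|W \bullet A_i^2| \leq \textrm{Tr}[W] \cdot \|A_i^2\|_{\textrm{op}} \leq \textrm{Tr}[W]$ (using $W, A_i^2 \succeq 0$ and $\|A_i\|_{\textrm{op}} \le 1$) gives $\textrm{Tr}[Q] \leq n \cdot \textrm{Tr}[W]$. Since $Q$ is PSD, Markov's inequality implies that at most $n/k$ of its eigenvalues exceed $k \cdot \textrm{Tr}[W]$. Taking $U$ to be the span of the eigenvectors of $Q$ whose eigenvalues are at most $k \cdot \textrm{Tr}[W]$ then yields $\dim(U) \geq (1 - \tfrac{1}{k}) n$ and the bound \eqref{eq:MatrixQuadraticErrorIneq} on every unit $y \in U$.

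For the runtime, I would precompute $W A_i$ for each $i$ in $O(m^3)$ time, totaling $O(nm^3)$; then each $Q_{ij} = \textrm{Tr}[(W A_i) A_j]$ costs $O(m^2)$, contributing $O(n^2 m^2)$; the eigendecomposition of $Q$ takes $O(n^3)$. All of this fits within the claimed $O(n^2 m^3 + n^3)$ budget. There is no real obstacle — the argument is a routine lift of the scalar Lemma~\ref{lem:QuadraticErrorInSubspaceU} to the matrix setting; the only mildly subtle point is verifying that $Q$ is symmetric despite $A_iA_j$ and $A_jA_i$ not being equal in general.
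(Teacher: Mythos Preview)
Your proposal is correct and follows essentially the same approach as the paper: both define the $n\times n$ matrix with entries $W\bullet A_iA_j$, verify it is symmetric and PSD, bound its trace by $n\cdot\textrm{Tr}[W]$, and extract $U$ as the span of eigenvectors with eigenvalue at most $k\cdot\textrm{Tr}[W]$ via Markov's inequality. Your symmetry argument (via skew-symmetry of the commutator) and your runtime accounting differ only cosmetically from the paper's.
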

{\em Proof.} See Appendix \ref{MatrixProofs}.
\vspace{2mm}

Again, we bound the increase in the potential function: 
\begin{lemma}\label{lem:PotentialFunctionDecreaseMatrixBalancing}
In each iteration $t$, one has $\Phi^{(t+1)} \leq (1 + 16\varepsilon^2\delta^2) \cdot \Phi^{(t)}$.
\end{lemma}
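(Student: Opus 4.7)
The plan is to mirror the scalar argument used in Lemma~\ref{PotentialFunctionBounded}, but with the scalar exponential replaced by the matrix exponential, using Golden--Thompson to avoid the non-commutativity of $\sum_i (x_i^{(t)} - x_i^{(0)}) A_i$ and the update $\delta \sum_i y_i^{(t)} A_i$. Write $M^{(t)} := \varepsilon \sum_i (x_i^{(t)} - x_i^{(0)}) A_i$ and $N^{(t)} := \varepsilon\delta \sum_i y_i^{(t)} A_i$, both symmetric. Then $W^{(t)} = \exp(M^{(t)})$ is positive semidefinite and Golden--Thompson yields
\[
 \Phi^{(t+1)} = \textrm{Tr}[\exp(M^{(t)} + N^{(t)})] \leq \textrm{Tr}[W^{(t)} \exp(N^{(t)})].
\]

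Next, I want to expand $\exp(N^{(t)})$ to second order. The scalar inequality $e^x \leq 1 + x + x^2$ for $|x|\leq 1$, applied eigenvalue by eigenvalue to the symmetric matrix $N^{(t)}$, gives $\exp(N^{(t)}) \preceq I + N^{(t)} + (N^{(t)})^2$ as long as $\|N^{(t)}\|_{\textrm{op}} \leq 1$. To check the premise, I bound $\|\sum_i y_i^{(t)} A_i\|_{\textrm{op}} \leq \sum_i |y_i^{(t)}| \leq \sqrt{n}\|y^{(t)}\|_2 \leq \sqrt{n}$, so $\|N^{(t)}\|_{\textrm{op}} \leq \varepsilon \delta \sqrt{n} = 1/\sqrt{n} \leq 1$ with the chosen $\varepsilon = \delta = 1/\sqrt{n}$. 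Because $W^{(t)} \succeq 0$, multiplying by $W^{(t)}$ and taking the trace preserves the ordering, so
\[
 \Phi^{(t+1)} \leq \textrm{Tr}[W^{(t)}] + W^{(t)} \bullet N^{(t)} + W^{(t)} \bullet (N^{(t)})^2.
\]

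Now I deal with the two error terms. The linear term is
\[
 W^{(t)} \bullet N^{(t)} = \varepsilon\delta \sum_{i=1}^n y_i^{(t)} (W^{(t)} \bullet A_i),
\]
which vanishes exactly because $y^{(t)} \in U_4^{(t)}$. For the quadratic term,
\[
 W^{(t)} \bullet (N^{(t)})^2 = \varepsilon^2 \delta^2 \cdot W^{(t)} \bullet \Bigl(\sum_{i=1}^n y_i^{(t)} A_i\Bigr)^{\!2},
\]
and since $y^{(t)} \in U_5^{(t)}$ (with $k=16$) and $\|y^{(t)}\|_2 \leq 1$, Lemma~\ref{lem:SubspaceWithBoundedMatrixQuadraticError} bounds this by $16 \varepsilon^2 \delta^2 \cdot \textrm{Tr}[W^{(t)}]$. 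Summing gives $\Phi^{(t+1)} \leq (1 + 16\varepsilon^2\delta^2)\Phi^{(t)}$.

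The only subtle steps are (i) justifying the matrix version of $e^N \preceq I + N + N^2$ — which reduces to the scalar inequality via the spectral decomposition once the operator-norm bound is in hand — and (ii) verifying the operator-norm bound using $\varepsilon\delta = 1/n$. Everything else is a direct translation of the scalar argument, with $U_4^{(t)}$ playing the role of $U_5^{(t)}$ from Section~2 (killing the linear term) and $U_5^{(t)}$ playing the role of $U_6^{(t)}$ (controlling the quadratic trace error).
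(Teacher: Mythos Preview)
Your proof is correct and follows essentially the same route as the paper: Golden--Thompson, then the matrix inequality $\exp(N) \preceq I + N + N^2$ under an operator-norm bound, then killing the linear term via $U_4^{(t)}$ and bounding the quadratic term via Lemma~\ref{lem:SubspaceWithBoundedMatrixQuadraticError} applied to $U_5^{(t)}$. The only cosmetic difference is that you bound $\|N^{(t)}\|_{\textrm{op}}$ via Cauchy--Schwarz ($\sum_i |y_i^{(t)}| \leq \sqrt{n}$) whereas the paper uses the cruder estimate $\sum_i |y_i^{(t)}| \leq n$; both suffice.
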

{\em Proof.} See the Appendix \ref{MatrixProofs}.
\vspace{2mm}

This gives us a bound on the potential function at the end of the algorithm.
\begin{lemma}
At the end of the algorithm, $\Phi^{(T)} \leq m \cdot \exp(32\varepsilon^2 n)$.
\end{lemma}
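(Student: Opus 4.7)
The plan is to combine the per-step multiplicative decrease from Lemma~\ref{lem:PotentialFunctionDecreaseMatrixBalancing} with the iteration bound $T \leq \frac{2n}{\delta^2}$ that was carried over (by the same argument as Lemma~\ref{lem:BoundOnNumOfIterations}) from the set-coloring algorithm. There is essentially nothing else going on: the statement is the straightforward integration of the recurrence.

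First I would compute the initial value. At $t=0$, the exponent $\varepsilon \sum_i (x^{(0)}_i - x^{(0)}_i) A_i$ is the zero matrix, so $\exp(\cdot)$ is the $m \times m$ identity, and hence $\Phi^{(0)} = \textrm{Tr}[I_m] = m$. Next, I would iterate Lemma~\ref{lem:PotentialFunctionDecreaseMatrixBalancing} to get
\[
\Phi^{(T)} \;\leq\; (1 + 16\varepsilon^2 \delta^2)^{T} \cdot \Phi^{(0)} \;=\; m \cdot (1 + 16\varepsilon^2 \delta^2)^{T}.
\]
Then I would apply the elementary inequality $1 + x \leq e^x$ (valid for all real $x$) to bound this by $m \cdot \exp(16\varepsilon^2 \delta^2 \cdot T)$. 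Finally, substituting $T \leq \frac{2n}{\delta^2}$ makes the $\delta^2$ cancel and yields $\Phi^{(T)} \leq m \cdot \exp(32\varepsilon^2 n)$, as required.

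There is no real obstacle here; the only thing to check is that the iteration-count lemma truly transfers from the partial-coloring setting. The excerpt asserts this explicitly (``By exactly the same arguments as in Lemma~\ref{lem:BoundOnNumOfIterations}\ldots''), relying on $y^{(t)} \in U_2^{(t)}$ (orthogonality to $x^{(t)}$) to grow $\|x^{(t)}\|_2^2$ by $\delta^2$ whenever $\alpha^{(t)} = 1$, and on the fact that $\alpha^{(t)} < 1$ at most $n$ times (since $|A^{(t)}|$ strictly decreases in each such step). Together with $\|x^{(T)}\|_\infty \leq 1$, this forces $T \leq \frac{2n}{\delta^2}$, and plugging this in completes the proof.
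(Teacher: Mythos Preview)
Your proof is correct and follows essentially the same approach as the paper: compute $\Phi^{(0)}=m$, iterate the per-step bound from Lemma~\ref{lem:PotentialFunctionDecreaseMatrixBalancing}, apply $1+x\le e^x$, and plug in $T\le 2n/\delta^2$. The only quibble is terminology---the per-step factor $(1+16\varepsilon^2\delta^2)$ is a multiplicative \emph{increase}, not a decrease---but the mathematics is fine.
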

\begin{proof}
Since $\Phi^{(0)} = \textrm{Tr}[\exp(\bm{0})] = \textrm{Tr}[I] = m$, we get that 
$\Phi^{(T)} \leq m \cdot (1+16\varepsilon^2\delta^2)^T \leq m \cdot \exp(32\varepsilon^2 n)$, using the fact that $T \leq \frac{2n}{\delta^2}$.
\end{proof}

\begin{lemma}\label{lem:muMax}
We have $\mu_{\max}(\sum_{i=1}^n (x_i^{(T)} - x_i^{(0)}) \cdot A_i)=O(\sqrt{n \log(\frac{2qm}{n})})$.
\end{lemma}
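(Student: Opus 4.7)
The plan is to pass from a bound on the potential function to a bound on $\mu_{\max}$ via the block-diagonal structure. Write $M^{(T)} := \sum_{i=1}^n (x_i^{(T)} - x_i^{(0)})A_i$, which is $q$-block diagonal with blocks $M_k^{(T)} := \sum_i (x_i^{(T)} - x_i^{(0)}) B_{i,k}$; hence $\mu_{\max}(M^{(T)}) = \max_k \mu_{\max}(M_k^{(T)})$. Since $\Phi_k^{(T)} = \textrm{Tr}[\exp(\varepsilon M_k^{(T)})]$ is a sum of exponentials of eigenvalues, $\Phi_k^{(T)} \geq \exp(\varepsilon \mu_{\max}(M_k^{(T)}))$, so it suffices to bound $\max_k \Phi_k^{(T)}$.

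Let $k^*$ be a block achieving $\max_k \Phi_k^{(T)}$. The key structural observation is that whenever $k \in I^{(t)}$, the subspace $U_3^{(t)}$ forces $\sum_i y_i^{(t)} B_{i,k} = 0$, so block $k$'s running matrix is frozen across iteration $t$. Define $t^*$ as the last iteration with $k^* \notin I^{(t^*)}$ (if no such iteration exists, $M_{k^*}^{(T)} = 0$ and $\Phi_{k^*}^{(T)} = q$, a trivial bound). For every $t > t^*$, $k^* \in I^{(t)}$, so $M_{k^*}^{(t+1)} = M_{k^*}^{(t)}$, and thus $\Phi_{k^*}^{(T)} = \Phi_{k^*}^{(t^*+1)}$. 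In iteration $t^*$, the matrix $\varepsilon M_{k^*}$ is perturbed by $\varepsilon\delta \sum_i y_i^{(t^*)} B_{i,k^*}$, whose operator norm is at most $\varepsilon\delta \sum_i |y_i^{(t^*)}| \leq \varepsilon\delta\sqrt{n}$ by Cauchy--Schwarz (using $\|B_{i,k^*}\|_{\textrm{op}} \leq \|A_i\|_{\textrm{op}} \leq 1$ and $\|y^{(t^*)}\|_2 \leq 1$). By Weyl's inequality, every eigenvalue of $\exp(\varepsilon M_{k^*})$ scales by at most $e^{\varepsilon\delta\sqrt{n}}$, yielding $\Phi_{k^*}^{(T)} \leq e^{\varepsilon\delta\sqrt{n}} \Phi_{k^*}^{(t^*)}$.

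To bound $\Phi_{k^*}^{(t^*)}$, I use that $k^* \notin I^{(t^*)}$ means at least $|I^{(t^*)}| = n/(16q^2)$ other blocks have potential at least $\Phi_{k^*}^{(t^*)}$. Averaging together with the preceding potential lemma gives $\Phi_{k^*}^{(t^*)} \leq \Phi^{(t^*)}/|I^{(t^*)}| \leq (16q^2 m/n) \exp(32\varepsilon^2 n)$. Combining everything,
$$\varepsilon \mu_{\max}(M^{(T)}) \leq \log \Phi_{k^*}^{(T)} \leq \log\Big(\frac{q^2 m}{n}\Big) + O(\varepsilon^2 n) + O(1),$$
and choosing $\varepsilon = \Theta(\sqrt{\log(2qm/n)/n})$ balances the dominant terms and gives $\mu_{\max}(M^{(T)}) = O(\sqrt{n\log(2qm/n)})$, as desired. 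The main obstacle is the freezing step: one must define $t^*$ as the \emph{last} iteration at which $k^*$ leaves the top set $I^{(t)}$ so that for every subsequent iteration block $k^*$ is pinned by $U_3^{(t)}$, reducing the analysis to a single one-step Weyl perturbation stacked on top of a pigeonhole bound inside $I^{(t^*)}$. The remaining ingredients---passing from trace to max eigenvalue, Cauchy--Schwarz on $\|y^{(t^*)}\|_2$, and logarithmic inversion---are then routine.
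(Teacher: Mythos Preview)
Your proof is correct and follows essentially the same approach as the paper: identify the critical block, use the last time it left the top-weight set $I^{(t)}$ to freeze it thereafter, bound the one-step jump at $t^*$, and pigeonhole against the global potential. The only technical difference is that the paper uses Golden--Thompson to bound $\Phi_{k}^{(t^*+1)} \leq e\cdot \Phi_k^{(t^*)}$ (via $\exp(\varepsilon\delta\sum_i y_i B_{i,k}) \preceq e I$ from $\|y\|_\infty \leq 1$), whereas you use Weyl's inequality together with Cauchy--Schwarz on $\|y\|_2$ to get the slightly sharper factor $e^{\varepsilon\delta\sqrt{n}}$; both choices work and the remainder of the argument is identical.
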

{\em Proof.} See Appendix \ref{MatrixProofs}.
\vspace{2mm}

These lemmas put together give us Theorem~\ref{thm:MatrixBalancingPartial}: an algorithm that yields a partial coloring with the claimed properties. We run the algorithm in phases to obtain Theorem \ref{thm:MatrixBalancing}, by boosting the partial coloring to a full coloring using a similar technique as in Lemma~\ref{lem:SetSystemPartial}. The interested reader may refer to Appendix \ref{MatrixProofs} for details.

\bibliographystyle{alpha}
\bibliography{main}

\newpage

\appendix

\section{Proofs from Section 3}\label{MatrixProofs}

  \begin{proof}[of Lemma~\ref{lem:SubspaceWithBoundedMatrixQuadraticError}]
Let $M \in \setR^{n \times n}$ be the matrix with entries $M_{ij}:= W\bullet A_iA_j$ for all $i,j \in [n]$. 
First note that the matrix $M$ is symmetric\footnote{One has to be careful as the product $A_iA_j$ is in general \emph{not} symmetric, even if $A_i$ and $A_j$ are symmetric.}, because
    $
      M_{ij}=W\bullet A_iA_j=W^T\bullet (A_iA_j)^T=W\bullet A_jA_i=M_{ji}$. 
Now, for any $y \in \setR^n$,  $\left(\sum_{i=1}^n y_iA_i\right)^2$ is symmetric and positive semidefinite and hence $y^TMy=\sum_{i,j=1}^ny_iy_j (W\bullet A_iA_j) =W\bullet \Big(\sum_{i=1}^ny_iA_i\Big)^2 \geq 0$, proving that $M$ is positive semidefinite. 

  Consider the eigendecomposition $M=\sum_{i=1}^n \mu_i\; u_iu_i^T$ where $\mu_i\geq 0$. Define the subspace
    $
      U:=\textrm{span}\{u_i \colon \mu_i < k \mathrm{Tr}[W]\}.
    $
    The desired inequality \eqref{eq:MatrixQuadraticErrorIneq} follows immediately from the definition of $U$. All that remains is to verify that $\dim(U)\geq (1-\frac{1}{k})n$.
    Since $\mu_i\geq 0$ we may apply {\em Markov's inequality} to deduce that
    $
      \# \{i\colon \mu_i\geq k\cdot\textrm{Tr}[W] \}\leq \frac{\textrm{Tr}[M]}{k\textrm{Tr}[W]}\leq \frac{n}{k},
    $
    where in the second inequality we have used the bound
    $      \textrm{Tr}[M] = \sum_{i=1}^n M_{ii} = \sum_{i=1}^n W\bullet A_i^2\leq n \cdot \textrm{Tr}[W].
    $
    Hence $\dim(U)\geq n-\#\{i\colon \mu_i\geq k\textrm{Tr}[W]\}\geq (1-\frac{1}{k})n$, as desired. 
    
    Finally, to bound the running time, we observe that computing $M$ takes time $O(n^2m^3)$ and the eigendecomposition of $M$ can be computed in time $O(n^3)$.
  \end{proof}

\begin{proof}[Proof of Lemma~\ref{lem:PotentialFunctionDecreaseMatrixBalancing}]
We estimate that
\begin{eqnarray*}
 \Phi^{(t+1)} &=& \textrm{Tr}\Big[\exp\Big(\varepsilon \sum_{i=1}^n (x^{(t+1)}_i - x^{(0)}_i) A_i \Big)\Big] \\
&=& \textrm{Tr}\Big[\exp\Big(\varepsilon \sum_{i=1}^n (x^{(t)}_i - x^{(0)}_i) A_i + \varepsilon\delta \sum_{i=1}^n y_i^{(t)}\Big)\Big] \\
&\stackrel{(*)}{\leq}& \textrm{Tr}\Big[\underbrace{\exp\Big(\varepsilon \sum_{i=1}^n (x_i^{(t)} - x_i^{(0)}) \cdot A_i\Big)}_{=W^{(t)}} \exp\Big(\varepsilon\delta \sum_{i=1}^n y_i^{(t)}A_i\Big)\Big] \\
&=& W^{(t)} \bullet \exp\Big( \varepsilon\delta \sum_{i=1}^n y_i^{(t)} A_i\Big) \\
&\stackrel{(**)}{\leq}& W^{(t)} \bullet \Big(I +  \varepsilon\delta \sum_{i=1}^n y_i^{(t)} A_i + \varepsilon^2\delta^2 \Big(\sum_{i=1}^n y_i^{(t)} A_i\Big)^2 \Big) \\
&=& \underbrace{W^{(t)} \bullet I}_{=\Phi^{(t)}} + \varepsilon\delta \underbrace{\Big(W^{(t)} \bullet \sum_{i=1}^n y_i^{(t)}A_i\Big)}_{=0\textrm{ since }y^{(t)} \in U_4^{(t)}} + \varepsilon^2\delta^2 \underbrace{\Big( W^{(t)} \bullet \Big(\sum_{i=1}^n y_i^{(t)} A_i\Big)^2 \Big)}_{\leq 16\cdot \textrm{Tr}[W^{(t)}]} \\
&\stackrel{(***)}{\leq}& \Phi^{(t)} \cdot (1 + 16\varepsilon^2\delta^2).
\end{eqnarray*}
In $(*)$ we use the Golden-Thompson inequality. In $(**)$ we use that $\exp(X) \preceq I + X + X^2$ for any symmetric 
matrix $X$ with $\|X\|_{\textrm{op}} \leq 1$ together with the triangle inequality
$$
\left\|\varepsilon\delta\sum_{i=1}^n y_i^{(t)} A_i\right\|_{op}\leq \varepsilon\delta\sum_{i=1}^n|y_i^{(t)}|\cdot\|A_i\|_{op}\leq \varepsilon\delta n=1.
$$
In $(***)$ we use Lemma~\ref{lem:SubspaceWithBoundedMatrixQuadraticError} and the fact that $y^{(t)} \in U_5^{(t)}$.
\end{proof}

  \begin{proof}[Proof of Lemma~\ref{lem:muMax}]
    Let $\mu_{\max}:=\mu_{\max}(\sum_{i=1}^n (x_i^{(T)} - x_i^{(0)}) \cdot A_i)$. Suppose the eigenspace corresponding to $\mu_{\textrm{max}}$ lies in block $k$, for some $k \in \lbrace{1,\ldots,m/q\rbrace}$. Let $t^*$ be the last iteration when $k$ was not among the $n/(16q^2)$ indices with maximum weight. We then have
\begin{align*}
\Phi^{(T)}_k &= \Phi^{(t^*+1)}_k = \textrm{Tr}\Big[\exp\Big(\varepsilon \sum_{i=1}^n (x_i^{(t^*)} + \delta y_i^{(t^*)} - x_i^{(0)})  B_{i,k}\Big)\Big] \\ 
&\stackrel{(*)}{\leq} \textrm{Tr}\Big[\exp\Big(\varepsilon \sum_{i=1}^n (x_i^{(t^*)} - x_i^{(0)})  B_{i,k}\Big)\underbrace{\exp\Big(-\varepsilon\delta \sum_{i=1}^n y_i^{(t^*)}B_{i,k}\Big)}_{\preceq e\bf I\quad (**)} \Big] \leq e\cdot\Phi^{(t^*)}_k,
\end{align*}
where in $(*)$, we use the Golden-Thompson inequality. In $(**)$ we have used the bounds $\|B_{i,k}\|_{op}\leq \|A_i\|_{op}\leq 1$ and $|y_i^{(t^*)}|\leq 1$ together with the triangle inequality to deduce that $\left\|\varepsilon\delta\sum_{i=1}^ny_i^{(t^*)}B_{i,k}\right\|_{op}\leq 1$. Hence
\[
  e^{\varepsilon \mu_{\max}} \leq \Phi^{(T)}_k \leq e\cdot \Phi^{(t^*)}_k \leq e\cdot \frac{16q^2}{n} \cdot \Phi^{(T)} \leq \frac{16eq^2}{n} \cdot m\exp(32\varepsilon^2 n).
\]
Then taking logarithms and dividing by $\varepsilon$ gives
\[
  \mu_{\max} \leq \frac{1}{\varepsilon} \cdot \log\left( \frac{16eq^2m}{n} \right) + 32\varepsilon n = O\Big(\sqrt{n\log(\frac{qm}{n})}\Big),
\] 
where in the final inequality we have used that $\varepsilon=\sqrt{\frac{\log(qm/n)}{n}}$.
\end{proof}


\begin{proof}[Proof of Theorem~\ref{thm:MatrixBalancing}]
  Set $x^{(0)} := \bm{0}$. For $s=1,\ldots,\log_2(n)$ many phases we do the following. Let $J^{(s)} := \{ i \in [n] : -1 < x^{(s-1)}_i < 1\}$
  be the not yet fully colored elements. Apply Theorem~\ref{thm:MatrixBalancingPartial} to find $x^{(s)} \in [-1,1]^n$ with 
  $$
    \left\|\sum_{i\in J^{(s)}} (x_i^{(s)}-x_i^{(s-1)} )\cdot A_i\right\|_{op}=O\left(\sqrt{|J^{(s)}| \log\frac{2qm}{|J^{(s)}|}}\right),
  $$
  and such that $x^{(s)}_i=x^{(s-1)}_i$ for all $i\not\in J^{(s)}$. Since each time at least half of the elements get fully colored we have $|J^{(s)}| \leq 2^{-(s-1)}n$ for all $s$.
  Then 
  $x := x^{(\log_2 n)} \in \{ -1,1\}^n$ and 
  \[
   \left\|\sum_{i=1}^n x_i\cdot A_i\right\|_{op} = \sum_{s \geq 1} O\Big(\sqrt{2^{-(s-1)}n \log\Big(\frac{2qm}{2^{-(s-1)}n)}\Big)}\Big) = O\Big(\sqrt{n \log(\tfrac{2qm}{n}})\Big),
  \]
  using that the sum of a subgeometric sequence is dominated by its first term. Phase $s$ has a running time of $O((2^{-(s-1)}n)^5+(2^{-(s-1)}n)^4m^3)$ and summing this geometric series over $s=1,\ldots,\log_2 n$ yields a total runtime of $O(n^5+n^4m^3)$.
\end{proof}

\section{Discrepancy minimization for matrices with bounded column length}\label{BoundedColumns}

In this section we prove Theorem~\ref{thm:ConstructiveBeckFialaAlgorithm}.
Fix a matrix $A \in \setR^{m \times n}$ with $\|A^j\|_2 \leq 1$ for each column $j=1,\ldots,n$.
Recently Bansal, Dadush and Garg \cite{ConstructiveBanaBansalDadushGarg16} gave the first 
polynomial time algorithm to find a coloring $x \in \{ -1,1\}^n$ with $\|Ax\|_{\infty} \leq O(\sqrt{\log n})$. Their method is based on a random walk, where the random updates in each iteration are 
chosen using a semidefinite program that has to be re-solved each time. We show that instead 
a deterministic walk can be used, guided by a suitable exponential potential function. The update directions 
will be chosen from the intersection of subspaces satisfying certain constraints; no SDP has to be solved in our method. We should also mention that the more general non-constructive result of Banaszczyk~\cite{BalancingVectors-Banaszczyk98} even guarantees signs $x$ so that $Ax \in 5 \cdot K$, where $K$ is any convex
body with Gaussian measure at least $1/2$.

In this section, let $C>0$ be a sufficiently large constant. 
For a row $i$ with $\|A_i\|_2^2 \leq \frac{1}{n}$, any coloring $x$ will satisfy $|\left<A_i,x\right>| \leq \|A_i\|_2 \cdot \|x\|_2 \leq 1$ and we can safely remove such a row. From now on we can assume that $\|A_i\|_2^2 \geq \frac{1}{n}$
and hence $m \leq n^2$.  
Note that it also suffices to find an $x \in \{ -1,1\}^n$ satisfying the \emph{one-sided} error $\left<A_i,x\right> \leq O(\sqrt{\log n})$ as one can simply stack $\frac{1}{\sqrt{2}} A$ and $-\frac{1}{\sqrt{2}}A$ together. Next, replace each row
 $A_i$ with two rows: one row is the \emph{light} row containing all entries of size $|A_{ij}| \leq \frac{1}{C^2\sqrt{\log n}}$ and the other row is the \emph{heavy} row whose only nonzero entries have size $|A_{ij}| > \frac{1}{C^2\sqrt{\log n}}$.
After this modification, we abbreviate the indices as
 $I_{\textrm{light}} := \{ i \in [m] : \|A_i\|_{\infty} \leq \frac{1}{C^2\sqrt{\log n}} \}$ and $I_{\textrm{heavy}} := \{ i \in [m] : \|A_i\|_{\infty} > \frac{1}{C^2\sqrt{\log n}}\}$.
As in the previous settings, our algorithm will compute a sequence $x^{(0)},\ldots,x^{(T)} \in [-1,1]^n$, starting at $x^{(0)} = \bm{0}$ 
so that the final point $x^{(T)}$ has coordinates only in $\{ -1,1\}$. 
For the point $x^{(t)} \in [-1,1]^n$ and some parameters $\alpha,\beta>0$ that we specify later, 
we define a \emph{potential function} $\Phi^{(t)} := \sum_{i \in I_{\textrm{light}}} w_i^{(t)}$ with
\[w_i^{(t)} := \exp\Big(\alpha \left<A_i,x^{(t)}\right> + \beta \min\Big\{ C,\sum_{j=1}^n (1-(x_j^{(t)})^2) \cdot A_{ij}^2 \Big\} \Big).\]
Here the quantity $L(i,x) := \sum_{j=1}^n (1-x_j^2) \cdot A_{ij}^2$ can be interpreted as the \emph{effective length}
of row $i$ with $L(i,\bm{0}) = \|A_i\|_2^2$ and $L(i,x) = 0$, if $x \in \{ -1,1\}^n$. 

The intuition behind the algorithm is as follows: at the beginning one has 
 $x^{(0)} = \bm{0}$ and the whole 
weight of the potential function comes from the $\beta$-term. Then in the course of the
algorithm the weight is transferred from the $\beta$-term to the $\alpha$-term until all elements
are colored and the effective length of all constraints is $0$. In fact, if $\beta \geq \Omega(\alpha^2)$, we show that the potential function is nonincreasing.

To keep the notation readable, for vectors $x,y \in \setR^n$ we write 
$(x \circ y) \in \setR^n$ for the vector with components $(x \circ y)_i := x_i \cdot y_i$ and $x^{\circ 2} := x \circ x$. Moreover, $x^{\otimes 2} := xx^T$ is the
\emph{tensor product}.
As before, we find an update vector in each iteration so that the potential function does not
 increase, by choosing it from the intersection of certain subspaces.
We postpone some linear algebra arguments till Section~\ref{sec:BeckFialaQuadraticErrorInSubspace}. 
We use the following algorithm: 
\begin{enumerate*}
\item[(1)] Set $x^{(0)} := \bm{0}$ and $A^{(0)} := [n]$.
\item[(2)] FOR $t=0$ TO $T$ DO
  \item[(3)] Let $I^{(t)} := \{ i \in I_{\textrm{light}} \mid L(i,x^{(t)}) < C\} \cup \{ i \in I_{\textrm{heavy}} \mid \sum_{j\in A^{(t)}} A_{ij}^2 < C\}$. Define the subspaces
  \begin{itemize*}
  \item $U_0^{(t)} := \textrm{span}\{ e_j \mid j \in A^{(t)} \} \subseteq \setR^n$
  \item $U_1^{(t)} := \{ x \in U_0^{(t)} \mid \left<x,x^{(t)}\right> = 0\}$
  \item $U_2^{(t)} := \{ x \in U_{0}^{(t)}  \mid \left<x, A_i\right> = 0 \; \forall i \notin I^{(t)}\}$
  \item $U_3^{(3)} := \{ x \in U_0^{(t)} \mid \left<\sum_{i \in I^{(t)}} w_i^{(t)}A_i,x\right> = 0 \}$
  \item $U_4^{(t)} := \{ x \in U_0^{(t)} \mid  \left<\sum_{i \in I^{(t)}} w_i^{(t)}\cdot (A_i^{\circ 2} \circ x^{(t)}),x\right> = 0 \}$
  \item $U_5^{(t)} \subseteq U_0^{(t)}$ with $\sum_{i \in I^{(t)}} w_i^{(t)} \left<A_i,x\right>^2 \leq \frac{\beta}{16\alpha^2} \sum_{i \in I^{(t)}} w_i^{(t)} \sum_{j=1}^n x_j^2A_{ij}^2$ for all $x \in U_5^{(t)}$ and $\dim(U_5^{(t)}) \geq \frac{15}{16}|A^{(t)}|$. (see Sec.~\ref{sec:BeckFialaQuadraticErrorInSubspace})
  \item $U_6^{(t)} \subseteq U_0^{(t)}$ with $\dim(U_6^{(t)}) \geq \frac{15}{16}|A^{(t)}|$ and $\sum_{i \in I^{(t)}} w_i^{(t)} \left<A_i^{\circ 2} \circ x^{(t)},x\right>^2 \leq \frac{1}{8\beta} \sum_{i \in I^{(t)}} w_i^{(t)} \sum_{j=1}^n x_j^2A_{ij}^2$ for all $x \in U_6^{(t)}$ (see Sec.~\ref{sec:BeckFialaQuadraticErrorInSubspace})
  \item $U^{(t)} := U_1^{(t)} \cap \ldots \cap U_6^{(t)}$.
  \end{itemize*}
  \item[(4)] Let $z^{(t)}$ be any unit vector in $U^{(t)}$. 
  \item[(5)] Choose a maximal $\alpha^{(t)} \in (0,1]$ so that $x^{(t+1)} := x^{(t)} + \delta \cdot y^{(t)} \in [-1,1]^n$ with $y^{(t)} = \alpha^{(t)}z^{(t)}$.
  \item[(6)] Let $A^{(t)} := \{ j \in [n] : -1 < x_j^{(t)} < 1 \}$. If $|A^{(t)}| \leq C$, then set $T := t$ and stop.
\end{enumerate*}
Technically speaking, the final point $x^{(T)}$ still has a constant number of entries not in $\{ -1,1\}$ --- 
these entries can be rounded arbitrarily. 
The first step is to guarantee that the subspace $U^{(t)}$ is indeed non-empty in each iteration: 
\begin{lemma}
In each iteration $t$, we have $\dim(U^{(t)}) \geq \frac{1}{2} |A^{(t)}|$, if $C$ is chosen large enough.
\end{lemma}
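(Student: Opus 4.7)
The plan is a straightforward dimension count in the ambient coordinate subspace $U_0^{(t)}$, which has dimension $|A^{(t)}|$. The subspaces $U_1^{(t)}$, $U_3^{(t)}$, $U_4^{(t)}$ are each cut out by a single linear equation, so each contributes at most one to the total codimension. The subspaces $U_5^{(t)}$ and $U_6^{(t)}$ are constructed (in Section~\ref{sec:BeckFialaQuadraticErrorInSubspace}) to have codimension at most $\frac{1}{16}|A^{(t)}|$ each. So the only nontrivial step is to bound the codimension contributed by $U_2^{(t)}$.

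The constraints cutting out $U_2^{(t)}$ are one linear equation per $i \notin I^{(t)}$, that is, one for each light row with $L(i, x^{(t)}) \geq C$ and one for each heavy row with $\sum_{j \in A^{(t)}} A_{ij}^2 \geq C$. To bound these counts I would use the column-norm hypothesis $\|A^j\|_2 \leq 1$, together with the simple observation that $1-(x_j^{(t)})^2 = 0$ whenever $j \notin A^{(t)}$. So for any light $i$, $L(i,x^{(t)}) = \sum_{j \in A^{(t)}} (1-(x_j^{(t)})^2) A_{ij}^2 \leq \sum_{j\in A^{(t)}} A_{ij}^2$, and
\[
  \sum_{i=1}^m \sum_{j \in A^{(t)}} A_{ij}^2 \;=\; \sum_{j \in A^{(t)}} \|A^j\|_2^2 \;\leq\; |A^{(t)}|.
\]
Markov's inequality then gives at most $|A^{(t)}|/C$ light rows with $L(i,x^{(t)})\geq C$ and at most $|A^{(t)}|/C$ heavy rows with $\sum_{j\in A^{(t)}}A_{ij}^2 \geq C$, so $|[m]\setminus I^{(t)}|\leq 2|A^{(t)}|/C$.

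Combining these bounds,
\[
  \dim(U^{(t)}) \;\geq\; |A^{(t)}| \;-\; 3 \;-\; \tfrac{2}{C}\,|A^{(t)}| \;-\; \tfrac{1}{8}\,|A^{(t)}|.
\]
Choosing $C$ sufficiently large (say $C\geq 32$) makes $\tfrac{2}{C}\leq \tfrac{1}{16}$, and since the algorithm has not yet terminated we have $|A^{(t)}|>C$, which absorbs the additive constant $3$ into a small fraction of $|A^{(t)}|$. The conclusion $\dim(U^{(t)})\geq \tfrac12 |A^{(t)}|$ follows.

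The only real subtlety is resisting the temptation to bound $|[m]\setminus I^{(t)}|$ by the total row norms (which may be large) or by $n/C$; the point is that restricting the quadratic forms to the active coordinates $A^{(t)}$ and then invoking the column-length hypothesis is precisely what makes the codimension of $U_2^{(t)}$ scale with $|A^{(t)}|$ rather than with $m$ or $n$. Without this reduction, one could not close the argument for arbitrary $|A^{(t)}|>C$.
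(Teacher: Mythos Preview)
Your proof is correct and follows essentially the same approach as the paper: bound the codimension of $U_2^{(t)}$ by observing that any $i\notin I^{(t)}$ satisfies $\sum_{j\in A^{(t)}}A_{ij}^2\geq C$ and then apply the column-norm bound $\sum_i\sum_{j\in A^{(t)}}A_{ij}^2\leq |A^{(t)}|$. The only cosmetic difference is that you count light and heavy rows separately (yielding $2|A^{(t)}|/C$), whereas the paper unifies them into a single Markov bound of $|A^{(t)}|/C$; this factor of $2$ is harmless since $C$ is free.
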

\begin{proof}
Observe that for $i \in I_{\textrm{light}} \setminus I^{(t)}$, \[
\sum_{j\in A^{(t)}} A_{ij}^2 \geq \sum_{j\in A^{(t)}} (1-\big(x^{(t)})^2 \big)A_{ij}^2 = \sum_{j=1}^n (1-\big(x^{(t)})^2 \big)A_{ij}^2 = L(i,x^{(t)}) \geq C,\]
and hence $\sum_{j\in A^{(t)}} A_{ij}^2 \geq C$ holds for all $i \notin I^{(t)}$. Now, since the $l^2$-norm of each column $A^j$ is at most $1$, we have 
\[
 |A^{(t)}| \geq \sum_{j\in A^{(t)}} \underbrace{\sum_{i \in [m]}  A_{ij}^2}_{\leq 1} = \sum_{i \in [m]} \sum_{j\in A^{(t)}} A_{ij}^2 \geq \sum_{i \notin I^{(t)}} \underbrace{\sum_{j\in A^{(t)}} A_{ij}^2}_{\geq C} \geq C (m-|I^{(t)}|).
\]
Hence, $\textrm{codim}(U_2^{(t)}) \leq |A^{(t)}|/C$. We can hence bound
\[
\dim(U^{(t)}) \geq \underbrace{|A^{(t)}|}_{U_0^{(t)}} - \underbrace{\frac{|A^{(t)}|}{C}}_{U_2^{(t)}} - \underbrace{\frac{|A^{(t)}|}{16}}_{U_5^{(t)}} - \underbrace{\frac{|A^{(t)}|}{16}}_{U_6^{(t)}} - \underbrace{(1+1+1)}_{U_1^{(t)},U_3^{(t)},U_4^{(t)}} \geq \frac{|A^{(t)}|}{2},
\]
if $C$ is chosen large enough.
\end{proof}
As before, one always has $\|x^{(t+1)}\|_2^2 \geq \|x^{(t)}\|_{2}^2$ and in each but at most $n$ iterations 
one has $\|x^{(t+1)}\|_2^2 = \|x^{(t)}\|_2^2 + \delta^2$. Then the algorithm terminates after $T \leq n+\frac{n}{\delta^2} \leq \frac{2n}{\delta^2}$
iterations, given that $0 < \delta \leq 1$.  

The main part of the analysis lies in guaranteeing that the potential function is nonincreasing.
\begin{lemma}
Suppose that $\beta \geq C \cdot \alpha^2$ where $C>0$ is a large enough constant with $0<\delta \leq \frac{1}{36\sqrt{\beta}}$
and $\|A_i\|_{\infty} \leq \frac{1}{C\sqrt{\beta}}$ for $i \in I_{\textrm{light}}$.
Then in each iteration $t$ we have $\Phi^{(t+1)} \leq \Phi^{(t)}$. 
\end{lemma}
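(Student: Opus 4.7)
The plan is to split $\Phi^{(t+1)}$ according to whether $i \in I^{(t)}$. For $i \in I_{\textrm{light}} \setminus I^{(t)}$ one has $L(i,x^{(t)}) \ge C$, so the truncation at $C$ in the definition of $w_i$ means the $\beta$-part of $\log w_i$ cannot grow; and $y^{(t)} \in U_2^{(t)}$ forces $\langle A_i, y^{(t)}\rangle = 0$, so the $\alpha$-part is fixed. Hence $w_i^{(t+1)} \le w_i^{(t)}$ for these rows without further work.

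For $i \in I_{\textrm{light}} \cap I^{(t)}$ I would explicitly compute the exponent change and write $w_i^{(t+1)} \le w_i^{(t)} \exp(a_i)$, where $a_i := p_i + q_i + r_i$ with $p_i := \alpha\delta\langle A_i, y^{(t)}\rangle$, $q_i := -2\beta\delta\langle A_i^{\circ 2}\circ x^{(t)}, y^{(t)}\rangle$, and $r_i := -\beta\delta^2 \sum_j (y_j^{(t)})^2 A_{ij}^2 \le 0$; the min truncation only weakens the upper bound and is harmless. Using the hypotheses $\beta \ge C\alpha^2$, $\delta \le 1/(36\sqrt{\beta})$, $\|A_i\|_\infty \le 1/(C\sqrt{\beta})$ together with $y^{(t)} \in U_0^{(t)}$, I would verify $|a_i|\le 1$ pointwise, so the scalar Taylor bound $e^{a_i}\le 1+a_i+a_i^2$ applies. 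Summing gives $\Phi^{(t+1)} - \Phi^{(t)} \le \sum_{i\in I_{\textrm{light}}\cap I^{(t)}} w_i^{(t)}(a_i + a_i^2)$. The linear contribution collapses by design of the subspaces: $\sum w_i^{(t)} p_i = 0$ by $U_3^{(t)}$, $\sum w_i^{(t)} q_i = 0$ by $U_4^{(t)}$, and only the $r_i$ sum survives, yielding $\sum w_i^{(t)} a_i = -\beta\delta^2 S$, where $S := \sum_i w_i^{(t)} \sum_j (y_j^{(t)})^2 A_{ij}^2$. This negative budget is what must not be overspent by the quadratic.

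For the quadratic, $U_5^{(t)}$ delivers $\sum w_i^{(t)} p_i^2 \le \tfrac{\beta\delta^2}{16} S$, $U_6^{(t)}$ delivers $\sum w_i^{(t)} q_i^2 \le \tfrac{\beta\delta^2}{2} S$, and $\sum w_i^{(t)} r_i^2$ is of fourth order in $\delta$, hence negligible under the step-size hypothesis (using $\|A_i\|_\infty^2 \le 1/(C^2\beta)$ to absorb one factor $\sum_j y_j^2 A_{ij}^2$ into $S$). The main obstacle is the cross term: the naive bound $(p+q+r)^2 \le 3(p^2+q^2+r^2)$ gives $\tfrac{27}{16}\beta\delta^2 S$, which exceeds the linear budget $\beta\delta^2 S$ and the argument fails. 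The resolution is Cauchy--Schwarz on the mixed sum, $|\sum w_i^{(t)} p_i q_i| \le \sqrt{(\sum w_i^{(t)} p_i^2)(\sum w_i^{(t)} q_i^2)} \le \tfrac{\beta\delta^2}{\sqrt{32}} S$, so that $\sum w_i^{(t)}(p_i+q_i)^2 \le (\tfrac{1}{16}+\tfrac{1}{2}+\tfrac{2}{\sqrt{32}})\beta\delta^2 S < \beta\delta^2 S$. Combined with the negligible $p_i r_i$, $q_i r_i$, $r_i^2$ contributions (bounded by Cauchy--Schwarz with slack), this yields $\sum w_i^{(t)}(a_i+a_i^2) \le 0$, giving the conclusion.
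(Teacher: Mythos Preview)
Your argument is correct and follows the paper's proof in structure: the same split into $I_{\textrm{light}}\setminus I^{(t)}$ (handled via truncation at $C$ and $y^{(t)}\in U_2^{(t)}$) and $I_{\textrm{light}}\cap I^{(t)}$, the same three-piece expansion $p_i+q_i+r_i$ of the exponent change, the same use of $U_3^{(t)},U_4^{(t)}$ to kill the weighted linear sums, and the same use of the negative term $-\beta\delta^2 S$ as a budget against the quadratic errors controlled through $U_5^{(t)},U_6^{(t)}$.

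The one place you diverge is purely technical. The paper applies the three-variable inequality $e^{x_1+x_2+x_3}\le 1+(x_1+x_2+x_3)+9(x_1^2+x_2^2+x_3^2)$ directly to the triple $(p_i,q_i,r_i)$; this decouples the squares, so no cross terms ever appear and each of $\sum w_i p_i^2$, $\sum w_i q_i^2$, $\sum w_i r_i^2$ is absorbed separately. You instead use the one-variable bound $e^{a}\le 1+a+a^2$ on $a_i=p_i+q_i+r_i$ and then handle the mixed term $\sum w_i p_iq_i$ by Cauchy--Schwarz in the weighted $\ell^2$ norm, which just barely closes (your constant $\tfrac{1}{16}+\tfrac{1}{2}+\tfrac{2}{\sqrt{32}}\approx 0.92<1$). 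The paper's decoupled inequality is cleaner bookkeeping and only needs each of $|p_i|,|q_i|,|r_i|\le 1$ individually; your route keeps the Taylor step elementary at the price of a tighter final margin.
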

\begin{proof}
Note that $w_i^{(t+1)} \leq w_i^{(t)}$ for any light index with $L(i,x^{(t)}) \geq C$. In fact, one can 
only have strict inequality if $L(i,x^{(t)}) > C \geq L(i,x^{(t+1)})$.
Hence we only need to prove that $\sum_{i \in I^{(t)} \cap I_{\textrm{light}}} w_i^{(t+1)} \leq \sum_{i \in I^{(t)} \cap I_{\textrm{light}}} w_i^{(t)}$.
For ease of notation, we drop the index $t$ and also write 
$x' = x + \delta y$ instead of $x^{(t+1)} = x^{(t)} + \delta y^{(t)}$, and $I$ instead of $I^{(t)} \cap I_{\textrm{light}}$. We estimate that
\begin{eqnarray}
\sum_{i \in I} w_{i}^{(t+1)}  
&=& \sum_{i \in I} \exp\Big(\alpha \left<A_i,x+\delta y\right> + \beta \sum_{j=1}^n (1-(x_j+\delta y_j)^2) \cdot A_{ij}^2 \Big) \label{eq:BeckFialaPotFctBound} \\
&=& \sum_{i \in I} \underbrace{\exp\Big(\alpha \left<A_i,x\right> + \beta \sum_{j=1}^n (1-x_j^2) \cdot A_{ij}^2\Big)}_{=w_i} \\
&\cdot& \exp\Big(\alpha \delta \left<A_i,y\right> - 2\beta \delta \big<A_i^{\circ 2} \circ x,y\big>  - \beta \delta^2 \sum_{j=1}^ny_j^2A_{ij}^2\Big)  \nonumber
\end{eqnarray}
Now we bound the second exponential term using the inequality $e^{x_1+x_2+x_3} \leq 1+x_1+x_2+x_3 + 9x_1^2+9x_2^2+9x_3^2$ for $\max\{ |x_1|,|x_2|,|x_3|\} \leq 1$. We obtain

\begin{eqnarray*}
\eqref{eq:BeckFialaPotFctBound} &\leq& 
 \sum_{i \in I} w_i + \Big[ \alpha \delta \underbrace{\sum_{i \in I} w_i \cdot \left<A_i,y\right>}_{=0\textrm{ as }y \in U_3^{(t)}} + 9\alpha^2\delta^2\sum_{i \in I} w_i \cdot \left<A_i,y\right>^2 \Big] \\ 
 &+&\Big[- 2\beta \delta \underbrace{\sum_{i \in I} w_i \cdot \big<A_i^{\circ 2} \circ x,y\big>}_{=0\textrm{ as }y \in U^{(t)}_4} + 9\cdot 4\beta^2\delta^2 \sum_{i \in I} w_i \cdot \left<A_i^{\circ 2} \circ x,y\right>^2 \Big] \\
 &+& \underbrace{\Big[ - \beta \delta^2 \sum_{i \in I} w_i\sum_{j=1}^n y_j^2A_{ij}^2 + 9\beta^2 \delta^4 \sum_{i \in I} w_i \cdot \Big( \sum_{j=1}^n y_j^2A_{ij}^2 \Big)^2 \Big]}_{\leq -\frac{\beta}{2}\delta^2 \sum_{i \in I} w_i \sum_{j=1}^n y_{j}^2A_{ij}^2} \\
 \end{eqnarray*}
Now, we use the fact that $9\beta \delta^2 \sum_{j=1}^n y_j^2 A_{ij}^2 \leq 9\beta \delta^2 \leq \frac{1}{2}$ to get
 \begin{eqnarray*}
\eqref{eq:BeckFialaPotFctBound} &\leq&  \sum_{i \in I} w_i + \delta^2 \underbrace{\sum_{i \in I} w_i \cdot \Big(9\alpha^2 \left<A_i,y\right>^2 - \frac{\beta}{4} \sum_{j=1}^n y_j^2A_{ij}^2\Big)}_{\leq 0\textrm{ since }y \in U_5^{(t)}} \\
&+& \beta \delta^2 \underbrace{\sum_{i \in I} w_i \cdot \Big(36\beta \left<A_i^{\circ 2} \circ x,y\right>^2 - \frac{1}{4} \sum_{j=1}^n y_j^2A_{ij}^2 \Big)}_{\leq 0\textrm{ since }y \in U_6^{(t)}}  \leq \sum_{i \in I} w_i.
\end{eqnarray*}
This proves the claim.
\end{proof}

\subsection{The discrepancy guarantee}

We can now prove that the algorithm indeed finds a vector satisfying the desired 
discrepancy bound: 
\begin{lemma}
For a proper choice of $\alpha := \Theta(\sqrt{\log n})$ and $\beta := \Theta(\log n)$, the algorithm
returns a vector $x := x^{(T)}$ with $\left<A_i,x\right> \leq O(\sqrt{\log n})$ for each row $i \in [m]$.
\end{lemma}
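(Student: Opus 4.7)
The plan is to combine potential-function monotonicity (handling the light rows) with a direct sparsity argument for heavy rows.

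First I bound the initial potential: since $x^{(0)} = \bm{0}$ and $\min\{C, \|A_i\|_2^2\} \leq C$, each light-row weight satisfies $w_i^{(0)} \leq e^{\beta C}$, so using $m \leq n^2$ (established in the preamble), $\Phi^{(0)} \leq n^2 e^{\beta C}$. By the preceding monotonicity lemma, $\Phi^{(T)} \leq \Phi^{(0)} \leq n^2 e^{\beta C}$. For any light row $i$, the non-negativity of the $\beta$-term in $w_i^{(T)}$ yields $w_i^{(T)} \geq \exp(\alpha \langle A_i, x^{(T)}\rangle)$, hence $\alpha \langle A_i, x^{(T)}\rangle \leq \log \Phi^{(T)} \leq 2\log n + \beta C$. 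With $\alpha = c_1 \sqrt{\log n}$ and $\beta = c_2 \log n$, this is $O(\sqrt{\log n})$.

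For a heavy row $i$, let $t^*$ be the first iteration in which $i \in I^{(t)}$ (set $t^* := T$ if this never happens). While $t < t^*$ the update $y^{(t)}$ lies in $U_2^{(t)}$, so $\langle A_i, y^{(t)}\rangle = 0$ and hence $\langle A_i, x^{(t^*)}\rangle = 0$. From $t^*$ onward, $A^{(t)} \subseteq A^{(t^*)}$ and $\sum_{j \in A^{(t^*)}} A_{ij}^2 < C$ by definition of $I^{(t)}$. Since every nonzero entry of a heavy row has $|A_{ij}| > 1/(C^2 \sqrt{\log n})$, this squared-norm bound forces the support of $A_i$ inside $A^{(t^*)}$ to have at most $C^5 \log n$ elements. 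Because $x_j^{(t^*)} = x_j^{(T)}$ for $j \notin A^{(t^*)}$ while $|x_j^{(T)} - x_j^{(t^*)}| \leq 2$ otherwise, Cauchy--Schwarz on that restricted support gives
$$
|\langle A_i, x^{(T)} - x^{(t^*)}\rangle| \leq 2 \sqrt{|\mathrm{supp}(A_i) \cap A^{(t^*)}| \cdot \sum_{j \in A^{(t^*)}} A_{ij}^2} = O(\sqrt{\log n}),
$$
hence $\langle A_i, x^{(T)}\rangle = O(\sqrt{\log n})$ as well. Combining the light and heavy bounds for each originally split row, and noting that rounding the at most $C$ remaining fractional entries of $x^{(T)}$ adds only $O(1)$ per row, yields the one-sided bound; the two-sided version follows by the stacking trick from the preamble.

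The main obstacle is choosing the constants $C, c_1, c_2$, and $\delta$ consistently so that the monotonicity lemma's hypotheses hold simultaneously: $\beta \geq C\alpha^2$ forces $c_2 \geq Cc_1^2$, while $\|A_i\|_\infty \leq 1/(C\sqrt{\beta})$ for light rows forces $c_2 \leq C^2$. Both can be satisfied by first picking $C$ large, then $c_1$ small so that $Cc_1^2 \leq C^2$, and finally $c_2 \in [Cc_1^2, C^2]$ with $\delta := 1/(36\sqrt{\beta}) = \Theta(1/\sqrt{\log n})$.
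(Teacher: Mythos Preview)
Your proof is correct and follows essentially the same approach as the paper's: potential monotonicity for light rows and a support-size plus Cauchy--Schwarz argument for heavy rows. Your version is slightly more explicit (you spell out the Cauchy--Schwarz step and the compatibility of the constants), but the structure and key ideas are identical.
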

\begin{proof}
First consider a light index $i \in I_{\textrm{light}}$. The potential function never increases, hence 
\[
  e^{\alpha \left<A_i,x\right>} \leq \Phi^{(T)} \leq \Phi^{(0)} \leq m \cdot e^{\beta \cdot C} \leq n^2 \cdot e^{\beta \cdot C}
\]
Taking logarithms and dividing by $\alpha$ gives
\[
 \left<A_i,x\right> \leq \frac{2\log(n)}{\alpha} + \frac{C\beta}{\alpha} \leq (C^2+2)\cdot \sqrt{\log(n)}.
\]
Here the last inequality follows for choices of if  $\alpha := \sqrt{\log(n)}$ and $\beta := C\log(n) = C\alpha^2$.
Now consider a heavy index $i$. Let $t$ be the last iteration when $\sum_{j \in A^{(t)}} A_{ij}^2 > C$. 
Until this point one has $\left<A_i,x^{(t)}\right> = 0$. 
Since $|A_{ij}| \geq 1 / (C^2\sqrt{\log n})$ for every non-zero entry, one has $|\{ j \in A^{(t+1)} : A_{ij} \neq 0\}| \leq C^5\log(n)$. Hence, regardless how those elements are colored, 
one has $\left<A_i,x^{(T)}\right> = \left<A_i,x^{(T)} - x^{(t)}\right> \leq 2\sum_{j \in A^{(t+1)}} |A_{ij}| \leq O(\sqrt{\log n})$.
\end{proof}

\subsection{Quadratic error in subspaces\label{sec:BeckFialaQuadraticErrorInSubspace}}

It remains to prove that the subspaces $U_5^{(t)}$ and $U_6^{(t)}$ used in the algorithm exist
with high enough dimensions. We will prove two lemmas that we keep general: 
\begin{lemma} \label{lem:GeneralSubspaceWithQuadraticInequality}
Let $A,B \in \setR^{m \times n}$ be any matrices with $|B_{ij}| \leq |A_{ij}|$ for all $(i,j) \in [m] \times [n]$
 and let $w_1,\ldots,w_m \geq 0$ be any weights.  
Then for any $k \in \setN$, one can compute a subspace $U$ of dimension at least $\dim(U) \geq (1-\frac{1}{k})n$ in time $O(n^2 (m+n))$ so that 
\[
  \sum_{i=1}^m w_i \cdot (B_iB_i^T \bullet yy^T) \leq k \sum_{i=1}^m w_i \cdot (\mathrm{diag}(A_i^{\circ 2}) \bullet yy^T) \quad \quad \forall y \in U.
\]
\end{lemma}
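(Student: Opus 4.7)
The plan is to reformulate the inequality as a generalized eigenvalue problem and then apply Markov's inequality, exactly as in Lemma~\ref{lem:SubspaceWithBoundedMatrixQuadraticError}. Define the $n\times n$ symmetric positive semidefinite matrices
\[
  M := \sum_{i=1}^m w_i\, B_i B_i^T, \qquad N := \sum_{i=1}^m w_i \, \mathrm{diag}(A_i^{\circ 2}),
\]
so that $B_iB_i^T \bullet yy^T = \langle B_i,y\rangle^2$ and the desired inequality reads $y^TMy \leq k\, y^TNy$. Note that $N$ is diagonal with entries $N_{jj} = \sum_i w_i A_{ij}^2$. If some $N_{jj}=0$, then $w_iA_{ij}^2=0$ for all $i$, whence $w_iB_{ij}^2=0$ as well by hypothesis, so the $j$-th coordinate can be deleted and we work on the support of $N$ without loss of generality. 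Assuming $N\succ 0$, substitute $y = N^{-1/2}z$ and set $\tilde M := N^{-1/2}MN^{-1/2}$; the task reduces to finding a subspace of $z$'s with $z^T\tilde M z \leq k\,\|z\|_2^2$.

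The key step is a trace bound on $\tilde M$ that exploits the hypothesis $|B_{ij}|\leq |A_{ij}|$. Compute
\[
  \mathrm{Tr}[\tilde M] = \mathrm{Tr}[MN^{-1}] = \sum_{i=1}^m w_i\,B_i^T N^{-1} B_i = \sum_{i,j} \frac{w_i B_{ij}^2}{N_{jj}} \leq \sum_{i,j} \frac{w_i A_{ij}^2}{N_{jj}} = \sum_j \frac{N_{jj}}{N_{jj}} \leq n.
\]
Since $\tilde M\succeq 0$, Markov's inequality implies that at most $n/k$ eigenvalues of $\tilde M$ exceed $k$. Letting $U':=\mathrm{span}\{u_j(\tilde M) : \mu_j(\tilde M) \leq k\}$ gives $\dim(U')\geq (1-1/k)n$, and for $z\in U'$ we have $z^T\tilde M z\leq k\|z\|_2^2$. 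Pulling back through $N^{-1/2}$, the subspace $U := N^{-1/2}U'$ has the same dimension and satisfies $y^TMy \leq k\,y^TNy$ for all $y\in U$, which is the stated inequality (extended by adding back the zero-coordinates if any were dropped).

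For the running time, forming $M$ by summing the $m$ weighted rank-one outer products $w_iB_iB_i^T$ costs $O(mn^2)$; $N$ is diagonal and takes $O(mn)$ to assemble; conjugating by $N^{-1/2}$ costs $O(n^2)$; and the eigendecomposition of $\tilde M$ costs $O(n^3)$. The total is $O(n^2(m+n))$, matching the claim. I do not expect any serious obstacle: the hypothesis $|B_{ij}|\leq |A_{ij}|$ is tailored precisely so that the off-diagonal masses of the $B_iB_i^T$'s are controlled by the diagonal masses of $A_i^{\circ 2}$ in the trace computation, which is the only place the assumption is used.
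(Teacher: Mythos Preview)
Your proof is correct and is essentially the same argument as the paper's. The only cosmetic difference is that the paper first absorbs the weights into the rows and then rescales each column $A^j,B^j$ by $1/\|A^j\|_2$ so that the right-hand matrix becomes $kI$, whereas you achieve the identical normalization in one step by conjugating with $N^{-1/2}$; both routes reduce to the same trace bound $\mathrm{Tr}[\tilde M]\leq n$ and the same Markov argument on the eigenvalues.
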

\begin{proof}
Consider the matrix $L := \sum_{i=1}^m w_i B_iB_i^T$ and $R := k \cdot \sum_{i=1}^m w_i \cdot \textrm{diag}(A_i^{\circ 2})$.
Then the goal is to find a subspace $U$ so that $(L \bullet yy^T) \leq (R \bullet yy^T)$ for all $y \in U$.
First, if we replace $A_i' := \sqrt{w_i} A_i$ and $B_i' := \sqrt{w_i} B_i$, 
then the assumption $|B_{ij}| \leq |A_{ij}|$ is preserved and the claim is not changed. Hence we may assume
that $w_i = 1$ for all $i \in [m]$. If $A^j = \bm{0}$, then also $B^j = \bm{0}$ and $(L \bullet e_je_j^T) = 0 = (R \bullet e_je_j^T)$ which means
that $e_j$ can be added to the subspace. So let us assume that $A^j \neq \bm{0}$ for all $j$.  
Next, if we scale a columns $A^j$ and $B^j$ by some scalar $s$ and 
we scale $y_j$ by $\frac{1}{s}$, then the claim remains invariant. Hence we assume that $\|A^j\|_2 = 1$ for all $j \in [n]$. 
Then 
\[
  \textrm{Tr}[L] = \sum_{i=1}^m \|B_i\|_2^2 = \sum_{j=1}^n \|B^j\|_2^2 \stackrel{|B_{ij}| \leq |A_{ij}|}{\leq} \sum_{j=1}^n \underbrace{\|A^j\|_2^2}_{= 1} = n
\]
On the other hand, 
\[
 R = k\sum_{i=1}^m \textrm{diag}(A_i^{\circ 2}) = k \cdot \textrm{diag}\Big(\Big(\underbrace{\sum_{i=1}^m A_{ij}^2}_{=1}\Big)_{j \in [n]}\Big) = k \cdot I
\]
Then $L$ must have less than $\frac{n}{k}$ eigenvalues of value more than $k$. Then 
we can define $U$ as the span of the eigenvectors of $L$ that have eigenvalue at most $k$.
Computing the matrices $L,R$ takes time $O(mn^2)$ and the eigendecomposition can be done in time $O(n^3)$.
\end{proof}
The existence of the subspace $U_5^{(t)}$ follows from choosing $B := A$ with $k := 16$ and $\frac{\beta}{16\alpha^2} \geq 16$.
The second lemma that we need is the following
\begin{lemma} \label{lem:SubspaceWithQuadraticInequality}
Let $A \in [-\gamma,\gamma]^{m \times n}$ and $x \in [-1,1]^n$. Then
for any $k \in \setN$ one can compute a subspace $U \subseteq \setR^n$ with $\dim(U) \geq n \cdot (1-\frac{1}{k})$ in time $O(n^2(m+n))$
so that 
\[
 \sum_{i=1}^m w_i \cdot \left((A_i^{\circ 2} \circ x)^{\otimes 2} \bullet yy^T\right) \leq k \cdot \gamma^2 \cdot \sum_{i=1}^m w_i \cdot \big(\mathrm{diag}(A_i^{\circ 2})\bullet yy^T\big) \quad\quad \forall y \in U
\]
\end{lemma}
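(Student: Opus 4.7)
The plan is to reduce this lemma directly to Lemma~\ref{lem:GeneralSubspaceWithQuadraticInequality}, which is the tool that already produced $U_5^{(t)}$. First I would unfold the Frobenius inner products on both sides: $(A_i^{\circ 2}\circ x)^{\otimes 2}\bullet yy^T = \langle A_i^{\circ 2}\circ x, y\rangle^2$ and $\mathrm{diag}(A_i^{\circ 2})\bullet yy^T = \sum_{j=1}^n A_{ij}^2 y_j^2$. So the desired inequality becomes
\[
  \sum_{i=1}^m w_i \langle A_i^{\circ 2}\circ x, y\rangle^2 \;\leq\; k\gamma^2 \sum_{i=1}^m w_i \sum_{j=1}^n A_{ij}^2 y_j^2 \quad \forall y\in U,
\]
which is exactly of the form handled by Lemma~\ref{lem:GeneralSubspaceWithQuadraticInequality}, up to a scaling factor of $\gamma^2$.

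To match that lemma's hypothesis, I would set $B\in\setR^{m\times n}$ by $B_{ij} := \tfrac{1}{\gamma} A_{ij}^2 x_j$. Since $|A_{ij}|\leq \gamma$ and $|x_j|\leq 1$, we get the crucial coordinatewise bound
\[
  |B_{ij}| \;=\; \frac{|A_{ij}|^2 |x_j|}{\gamma} \;\leq\; |A_{ij}| \cdot \frac{|A_{ij}|}{\gamma} \cdot |x_j| \;\leq\; |A_{ij}|,
\]
which is precisely the requirement $|B_{ij}|\leq |A_{ij}|$ of Lemma~\ref{lem:GeneralSubspaceWithQuadraticInequality}. Applying that lemma to $A$, $B$, $w$, and $k$ yields, in time $O(n^2(m+n))$, a subspace $U$ of dimension at least $(1-\tfrac{1}{k})n$ on which $\sum_i w_i\langle B_i, y\rangle^2 \leq k\sum_i w_i\sum_j A_{ij}^2 y_j^2$. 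Multiplying through by $\gamma^2$ and using $\gamma B_i = A_i^{\circ 2}\circ x$ gives the claimed inequality on the same $U$, finishing the proof.

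There is really no hard step here: the lemma is essentially a corollary of Lemma~\ref{lem:GeneralSubspaceWithQuadraticInequality}, and the only thing to check is the scaling that converts the bound $|A_{ij}^2 x_j|\leq \gamma|A_{ij}|$ into the normalized form $|B_{ij}|\leq |A_{ij}|$ that Lemma~\ref{lem:GeneralSubspaceWithQuadraticInequality} requires. The dimension and running-time guarantees are inherited verbatim from that lemma.
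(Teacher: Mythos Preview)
Your proposal is correct and follows exactly the paper's own proof: define $B_i := \gamma^{-1}(A_i^{\circ 2}\circ x)$, check $|B_{ij}|\leq |A_{ij}|$ from $|A_{ij}|\leq\gamma$ and $|x_j|\leq 1$, and invoke Lemma~\ref{lem:GeneralSubspaceWithQuadraticInequality}. The only difference is that you spell out the Frobenius inner products and the final $\gamma^2$-scaling step more explicitly than the paper does.
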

\begin{proof}
We define a matrix $B \in \setR^{m \times n}$ by letting
$B_{i} := \frac{A_i^{\circ 2} \circ x}{\gamma}$. Then $|B_{ij}| \leq |A_{ij}|$ and 
applying Lemma~\ref{lem:GeneralSubspaceWithQuadraticInequality} gives the claim.
\end{proof}
Then applying Lemma~\ref{lem:SubspaceWithQuadraticInequality} with $\gamma := \frac{1}{C\sqrt{\beta}}$ and $k = 16$ guarantees the subspace $U_6^{(t)}$.
For the running time analysis of Theorem~\ref{thm:ConstructiveBeckFialaAlgorithm}, one 
can set $\delta := \Theta(\frac{1}{\sqrt{\log n}})$ and the algorithm only takes $O(n \log (n))$ iterations, 
each taking time $O(n^2 (m+n))$.

\end{document}